\title{Dynamic Complexity Meets Parameterised Algorithms}
\author{Jonas Schmidt}{TU Dortmund University, Dortmund, Germany}{}{}{}%
\author{Thomas Schwentick}{TU Dortmund University, Dortmund, Germany}{}{}{}
\author{Nils Vortmeier}{TU Dortmund University, Dortmund, Germany}{}{}{}
\author{Thomas Zeume}{TU Dortmund University, Dortmund, Germany}{}{}{}
\author{Ioannis Kokkinis}{TU Dortmund University, Dortmund, Germany}{}{}{}
\authorrunning{J.\,Schmidt and T.\,Schwentick and N.\,Vortmeier and
  T.\,Zeume and I.\,Kokkinis}%
\keywords{Dynamic complexity, parameterised complexity}%
\fancyrefchangeprefix{\fancyrefseclabelprefix}{section}
\theoremstyle{plain}
\newif\ifcomments
\newif\ifchanges
\newnotation{\myclass}[1]{\textsf{\upshape #1}}[Layout for all complexity classes][\notationarg{1}{Name}]
\newnotation{\StaClass}[1]{\myclass{#1}}[Layout for static complexity classes][\notationarg{1}{CName}]
\newnotation{\DynClass}[1]{\myclass{Dyn#1}}[ Layout for dynamic complexity classes (absolute and Delta-Semantics)][\notationarg{1}{Name}]
\newnotation{\ParaClass}[1]{\myclass{para-#1}}[Layout for parameterised complexity classes][\notationarg{1}{Name}]
\newnotation{\paraproblem}[1]{\problem{$p$-#1}}
\newnotation{\class}{\calC}[An arbitrary complexity class]
\newnotation{\LOGSPACE}{\myclass{LOGSPACE}}
\newnotation{\NLOGSPACE}{\myclass{NLOGSPACE}}
\newnotation{\PSPACE}{\myclass{PSPACE}}
\newnotation{\PFP}{\myclass{PFP}}
\newnotation{\classL}{\myclass{L}}
\newnotation{\NL} {\myclass{NL}}
\renewnotation{\P}{\myclass{P}}
\newnotation   {\PTIME}    {\myclass{PTIME}}
\newnotation   {\NP}   {\myclass{NP}}
\newnotation   {\NC}   {\myclass{NC}}
\newnotation   {\AC}   {\myclass{AC}}
\newnotation     {\TC}   {\myclass{TC}}
\newnotation   {\SAC}   {\myclass{SAC}}
\newnotation   {\ACC}   {\myclass{ACC}}
\newnotation   {\tc}   {\myclass{TC}}   %
\newnotation   {\PPoly}{\myclass{\mbox{P}/\mbox{Poly}}} %
\newnotation   {\FOarb}   {\myclass{FO(arb)}}
\newnotation{\ACz}{\AC^0}
\newnotation{\ACzu}{\AC^{0\uparrow}}
\newnotation{\FO}{\StaClass{FO}}
\newnotation{\FOar}{\StaClass{FO$(+,\!\times\!)$}}
\newnotation{\MSO}{\StaClass{MSO}}
\newnotation{\GSO}{\StaClass{GSO}}
\newnotation{\EMSO}{\StaClass{$\exists$MSO}}
\newnotation{\QFO}[1][\quant]{\StaClass{\ensuremath{#1}FO}}
\newnotation{\EFO}{\QFO[\exists^*]}
\newnotation{\AFO}{\QFO[\forall^*]}
\newnotation{\Prop}{\StaClass{Prop}}
\newnotation{\QF}{\StaClass{QF}}
\newnotation{\Ind}[1]{\text{IND}[#1]}
\newnotation{\DynTC}{\DynClass{TC}}
\newnotation{\DynACz}{\DynClass{\ACz}}
\newnotation{\DynProp}{\DynClass{Prop}}
\newnotation{\DynQF}{\DynClass{QF}}
\newnotation{\DynFO}{\DynClass{FO}}
\newnotation{\DynFOar}{\DynClass{FO$(+,\times)$}}
\newnotation{\DynC}{\DynClass{$\class$}}[an arbitrary dynamic complexity class]
\newnotation{\ParaP}{\ParaClass{\P}}
\newnotation{\FPT}{\StaClass{FPT}}
\newnotation{\ParaAC}{\ParaClass{\AC}}
\newnotation{\ParaTC}{\ParaClass{\TC}}
\newnotation{\ParaNC}{\ParaClass{\NC}}
\newnotation{\ParaACz}{\ParaClass{\ACz}}
\newnotation{\ParaACzu}{\ParaClass{\ACzu}}
\newnotation{\ParaSTD}{\ParaClass{ST-DynFO}}[parameterized structure and parameterized applications of formulas]
\newnotation{\ParaSD}{\ParaClass{S-DynFO}}[parameterized structure only]
\newnotation{\ParaTD}{\ParaClass{T-DynFO}}[parameterized  applications of formulas only]
\newnotation{\ParaSAC}{\ParaClass{S-\AC}}
\newnotation{\ParaTAC}{\ParaClass{T-\AC}}
\newnotation{\ParaSTAC}{\ParaClass{ST-\AC}}
\newnotation{\ParaST}{\ParaClass{ST-FO}}[parameterized structure, parameterized applications of the formula and arithmetic]
\newnotation{\ParaS}{\ParaClass{S-FO}}[parameterized structure and arithmetic]
\newnotation{\ParaT}{\ParaClass{T-FO}}[parameterized applications of the formula and arithmetic]
\newcommand{\paraProblemDescription}[4]{
  \setlength{\tabcolsep}{4pt}
\begin{flushleft}
  \begin{tabular}{r p{0.8\textwidth}}
      {\bf Problem:} & #1 \\
      {\bf Input:} & #2, %
      {\bf Parameter:}  #3 \\
      {\bf Question:} & #4 \\
  \end{tabular}
\end{flushleft}
}
\newnotation{\reach}{\problem{Reach}}
\newnotation{\nonPvCover}{\problem{VertexCover}}
\newnotation{\parity}{\problem{Parity}}
\newnotation{\vCover}{\paraproblem{VertexCover}}
\newnotation{\planIndSet}{\paraproblem{PlanarIndependentSet}}
\newnotation{\cString}{\paraproblem{ClosestString}}
\newnotation{\cStringC}{\paraproblem{ClosestStringCandidate}}
\newnotation{\fVS}{\paraproblem{FeedbackVertexSet}}
\newnotation{\disFVS}{\paraproblem{DisjointFeedbackVertexSet}}
\newnotation{\disVCover}{\paraproblem{DisjointVertexCover}}
\newnotation{\mSPack}{\paraproblem{m-SetPacking}}
\newnotation{\SPack}{\paraproblem{SetPacking}}
\newnotation{\threeSPack}{\paraproblem{3-SetPacking}}
\newnotation{\knapsack}{\paraproblem{Knapsack}}
\newnotation{\longpath}{\paraproblem{LongestPath}}
\newnotation{\pointLC}{\paraproblem{$d$-PointLineCover}}
\newnotation{\mtext}[1]{\textsc{#1}}[mathematical text (for function names etc.)]
\else\newnotation[callLetters]{\calA}{A}\fi
\else\newnotation[callLetters]{\calB}{B}\fi
\else\newnotation[callLetters]{\calC}{C}\fi
\else\newnotation[callLetters]{\calD}{D}\fi
\else\newnotation[callLetters]{\calE}{E}\fi
\else\newnotation[callLetters]{\calF}{F}\fi
\else\newnotation[callLetters]{\calG}{G}\fi
\else\newnotation[callLetters]{\calH}{H}\fi
\else\newnotation[callLetters]{\calI}{I}\fi
\else\newnotation[callLetters]{\calK}{K}\fi
\else\newnotation[callLetters]{\calL}{L}\fi
\else\newnotation[callLetters]{\calM}{M}\fi
\else\newnotation[callLetters]{\calN}{N}\fi
\else\newnotation[callLetters]{\calO}{O}\fi
\else\newnotation[callLetters]{\calP}{P}\fi
\else\newnotation[callLetters]{\calQ}{Q}\fi
\else\newnotation[callLetters]{\calR}{R}\fi
\else\newnotation[callLetters]{\calS}{S}\fi
\else\newnotation[callLetters]{\calT}{T}\fi
\else\newnotation[callLetters]{\calU}{U}\fi
\else\newnotation[callLetters]{\calV}{V}\fi
\else\newnotation[callLetters]{\calW}{W}\fi
\else\newnotation[callLetters]{\calX}{X}\fi
\else\newnotation[callLetters]{\calY}{Y}\fi
\else\newnotation[callLetters]{\calZ}{Z}\fi
\newnotation{\pmone}{\raisebox{.2ex}{$\scriptstyle\pm$}1}
\newnotation{\pone}{\raisebox{.2ex}{$\scriptstyle+$}1}
\newnotation{\mone}{\raisebox{.2ex}{$\scriptstyle-$}1}
\newnotation{\kmax}{k_\textup{max}}
\newnotation{\bigO}{\mathcal{O}}[big O]
\newnotation{\mneg}{\neg}[negation symbol]
\newnotation{\smallO}{o}[small o]
\newnotation{\hamD}{d_H}[Hamming distance]
\newnotation{\fvsSet}{\mathrm{FVS}}[Feedback Vertex Set as a set (not as problem)]
\newnotation{\vcSet}{\mathrm{VC}}[Vertex cover as a set (not as problem)]
\newnotation{\tpl}[1]{\bar{#1}}[$t$ is a tuple][\notationarg{1}{t}]
\newnotation{\restrict}[2]{#1[#2]}[$R$ is restricted to $A$][\notationarg{1}{R}\notationarg{2}{A}]
\newrelation{\df}{\smash{\stackrel{\scriptscriptstyle{\text{def}}}{=}}}[ definition]
\newrelation{\ubMulChanReach}{\frac{\log n}{\log \log n}}[An upper bound for multiple changes, after which reach can be maintained]
\newnotation{\eval}[3]{#1(#2/#3)}[The evaluation 
assigns values  $\overline{v}$ to variables $\overline{x}$ in
formula $\phi$][\notationarg{1}{\phi} \notationarg{2}{\overline{x}} \notationarg{3}{\overline{v}}]
\newnotation{\assignment}{\theta}[assignment]
\newnotation{\arity}{\text{Ar}}[arity]
\newnotation{\schema}{\tau}
\newnotation{\struc}{\calS}
\newnotation{\struca}{\struc}
\newnotation{\strucb}{\calT}
\newnotation{\struct}[1]{\text{STRUC}[#1]}
\newnotation{\mthen}{\rightarrow}
\newnotation{\mand}{\wedge}
\newnotation{\mor}{\vee}
\newnotation{\munion}{\cup}
\newnotation{\mintersect}{\cap}
\newnotation{\mdisjunion}{\biguplus}
\newnotation{\arb}{\star}%
\newnotation{\generic}{\textsc{generic}}
\newnotation{\quant}{\mathbb{Q}}
\newnotation{\cquant}{\overline{\mathbb{Q}}}
\newnotation{\db}{\calD}
\newnotation{\inp}{\calI}
\newnotation{\aux}{\calA}
\newnotation{\para}{\Pi}
\newnotation{\builtin}{\calB}
\newnotation{\domain}{D}
\newnotation{\actDomain}{D_\text{act}}
\newnotation{\emptyDB}{\db_\emptyset}
\newnotation{\rel}[1]{#1}[relation layout]
\newnotation{\query}{Q}[query]
\newnotation{\cq}{\calC}[conjunctive query]
\newnotation{\Qreach}{\query_{\text{Reach}}}[The Reachability Query]
\newnotation{\ans}[2]{\mtext{ans}(#1, #2)}[the answer of a query~$q$ on database~$db$][\notationarg{1}{q}\notationarg{2}{db}]
\newnotation{\qRel}{\mtext{ans}}[name of the query relation]
\newnotation{\adom}{\text{adom}}[active domain]
\newnotation{\dom}{\text{dom}}[domain]
\newnotation{\BIT}{\text{BIT}}[the BIT predicate]
\newnotation{\leqadom}{\leq_{\adom}}[linear order on active domain]
\newnotation{\BITadom}{\text{BIT}_{\adom}}
\newnotation{\updates}{\Delta}
\newnotation{\abstrDel}{\updates_{Del}}
\newnotation{\abstrIns}{\updates_{Ins}}
\newnotation{\abstrUpd}{\updates}
\newnotation{\init}{\mtext{Init}}
\newnotation{\ins}{\mtext{ins}}
\newnotation{\del}{\mtext{del}}
\newnotation{\set}{\mtext{set}}
\newnotation{\rep}{\mtext{rep}}
\newnotation{\insertdescr}[2]{\textbf{Insertion of }#2\textbf{ into }#1.}
\newnotation{\deletedescr}[2]{\textbf{Deletion of }#2\textbf{ from } #1.}
\newnotation{\state}{\struc}[state]
\newnotation{\inpSchema}{\schema_{\textup{in}}}[input schema]
\newnotation{\auxSchema}{\schema_{\textup{aux}}}[auxiliary schema]
\newnotation{\aritSchema}{\schema_{\textup{arith}}}[arithmetic schema]
\newnotation{\advSchema}{\schema_{\textup{adv}}}
\newnotation{\eqSchema}{\schema_{=}}
\newnotation{\builtinSchema}{\schema_{\textup{bi}}}
\newnotation{\modSchema}{\schema_{\textup{mod}}}
\newnotation{\auxInit}{\init_{\textup{aux}}}
\newnotation{\builtinInit}{\init_{\textup{bi}}}
\newnotation{\FOprog}{\calF}
\newnotation{\upProg}{P}
\newnotation{\prog}{\calP}
\newnotation{\progb}{Q}
\newnotation{\updateDB}[2]{#1(#2)}[Application of update $\alpha$ in database $db$][\notationarg{1}{\alpha}\notationarg{2}{db}]
\newnotation{\updateState}[3]{#1_{#2}(#3)}[Program $P$ applies
update $\alpha$ to state $S$][\notationarg{1}{P} \notationarg{2}{\alpha}
\notationarg{3}{S}]
\newnotation{\updateRelation}[4]{\restrict{{#1}_{#2}(#3)}{#4}}
\newnotation{\transition}[3]{{#1} \xrightarrow{#2}{#3}}
\newnotation{\uf}[4]{
  \@ifmtarg{#4}{
    \phi^{#1}_{#2}(#3)}{
    \phi^{#1}_{#2}(#3; #4)}
}[1: updated auxiliary relation, 2: update, 3: inserted/deleted tuple, 
4 updated tuple in auxiliary relation]
\newnotation{\huf}[4]{
\@ifmtarg{#4}{\widehat{\phi}^{#1}_{#2}(#3)}{\widehat{\phi}^{#1}_{#2}(#3; #4)}
}[update formula with hat]
\newnotation{\ufb}[4]{
\@ifmtarg{#4}{%
\psi^{#1}_{#2}(#3)
}{
\psi^{#1}_{#2}(#3; #4)
}}
\newnotation{\ufbwa}[2]{\psi^{#1}_{#2}}
\newnotation{\ufwa}[2]{\phi^{#1}_{#2}}[update formula withour arguments. 1: updated auxiliary relation, 2: update]
  \newnotation{\ufsubstitute}[5]{
    \@ifmtarg{#5}{
      \phi^{#2}_{#3}[#1](#4)
    }{
      \phi^{#2}_{#3}[#1](#4; #5)
    }
  }[Modified update formula. 1: substitute, 2: updated auxiliary relation, 
  3: update, 4: inserted/deleted tuple, 5 updated tuple in auxiliary relation]
\newnotation{\ufsubstitutewa}[3]{ \phi^{#2}_{#3}[#1]}
\newnotation{\substitutewa}[2]{#1[#2]}[Modified update formulas
 1: substitute, 2: updated auxiliary relation, 3: update, 4: inserted/deleted tuple, 5 updated tuple in auxiliary relation]
\newnotation{\ut}[4]{
  \@ifmtarg{#4}{t^{#1}_{#2}(#3) }{t^{#1}_{#2}(#3; #4)}
}[update term. 1: updated auxiliary function, 2: update, 3: inserted/deleted tuple, 4 updated tuple in auxiliary relation]
\newnotation{\utwa}[2]{t^{#1}_{#2}}
\newnotation{\ite}[3]{
  \@ifmtarg{#1}{
    \mtext{ITE}
   }{
    \mtext{ITE}\text{(#1,#2,#3)}  
  }
}[if-then-else update term]
\newnotation{\mf}[3]{\@ifmtarg{#3}{\mu_{#1}(#2)}{\mu_{#1}(#2; #3)}}[modification formula. 1: modified input relation,  2: parameters, 3 modified tuple in input relation]
\newnotation{\mfos}[4]{
  \@ifmtarg{#4}{
    {#1}_{#2}(#3)}{
    {#1}_{#2}(#3; #4)}
}[modification formula with other symbol. 1: other symbol, 2: modified input relation, 3: parameters, 4 modified tuple in input relation]
\newnotation{\mfwa}[1]{\mu_{#1}}[modification formula without arguments
1: updated auxiliary relation]
  \newcommand{\changeRule}[3]{
          \textbf{on change}\ #1\ \textbf{update}\ #2\ \textbf{as}\ #3%
  }
\newcommand{\commentbox}[1]{\noindent\framebox{\parbox{0.98\linewidth}{#1}}}
\newcommand{\acomment}[2]{\ \\ \fbox{\parbox{0.98\linewidth}{{\sc #1}: #2}}}
\newcommand{\mcomment}[2]{{\color{blue}(#1)}\footnote{#1: #2}} %
\newcommand{\commentbox}[1]{}
\newcommand{\mcomment}[2]{}
\newcommand{\acomment}[2]{}
 \newcommand{\tzm}[1]{\mcomment{TZ}{#1}}
 \newcommand{\tsm}[1]{\mcomment{TS}{#1}}
 \newcommand{\nilsm}[1]{\mcomment{NV}{#1}}
\newcommand*{\fancyrefexlabelprefix}{ex}
\newcommand*{\frefexname}{\text{example}}
\newcommand*{\Frefexname}{\text{Example}}
\newcommand*{\fancyrefclaimlabelprefix}{claim}
\newcommand*{\frefclaimname}{\text{claim}}
\newcommand*{\Frefclaimname}{\text{Claim}}
\newcommand*{\fancyrefdeflabelprefix}{def}
\newcommand*{\frefdefname}{\text{definition}}
\newcommand*{\Frefdefname}{\text{Definition}}
\newcommand*{\fancyrefthmlabelprefix}{thm}
\newcommand*{\frefthmname}{\text{theorem}}
\newcommand*{\Frefthmname}{\text{Theorem}}
\newcommand*{\fancyrefremlabelprefix}{rem}
\newcommand*{\frefremname}{\text{remark}}
\newcommand*{\Frefremname}{\text{Remark}}
\newcommand*{\fancyreflemlabelprefix}{lem}
\newcommand*{\freflemname}{\text{lemma}}
\newcommand*{\Freflemname}{\text{Lemma}}
\newcommand*{\fancyrefsubseclabelprefix}{subsec}
\newcommand*{\frefsubsecname}{\text{subsection}}
\newcommand*{\Frefsubsecname}{\text{Subsection}}
\newcommand*{\fancyrefcorlabelprefix}{cor}
\newcommand*{\frefcorname}{\text{corollary}}
\newcommand*{\Frefcorname}{\text{Corollary}}
\newcommand*{\fancyrefalgolabelprefix}{algo}
\newcommand*{\frefalgoname}{\text{algorithm}}
\newcommand*{\Frefalgoname}{\text{Algorithm}}
\newcommand*{\fancyrefproplabelprefix}{prop}
\newcommand*{\frefpropname}{\text{proposition}}
\newcommand*{\Frefpropname}{\text{Proposition}}
\tikzstyle{mnode}=[
\tikzstyle{mnodeinvisible}=[
\tikzstyle{invisible}=[
\tikzstyle{invisiblel}=[
\tikzstyle{invisibleEdge}=[
\tikzstyle{nameNode}=[
\tikzstyle{namingNode}=[
\tikzstyle{mEdge}=[
\tikzstyle{dDashedEdge}=[
\tikzstyle{dEdge}=[
\tikzstyle{dhEdge}=[
\tikzstyle{uEdge}=[
\tikzstyle{uhEdge}=[
\tikzstyle{cEdge}=[
\tikzstyle{dotsEdge}=[
\tikzstyle{class rectangle}=[
\tikzstyle{mline}=[
\tikzstyle{mainclass rectangle}=[
\newcommand{\mnodedrawcolor}{black!80}
\newcommand{\mnodefillcolor}{black!40}
\tikzstyle{background rectangle}=[
\begin{document}
  \maketitle
 \begin{abstract}
Dynamic Complexity studies the maintainability of queries with logical
formulas in a setting where the underlying structure or database
changes over time. Most often, these formulas are  from
first-order logic, giving rise to the dynamic complexity class
\DynFO. This paper investigates extensions of \DynFO in the spirit of
parameterised algorithms. In this setting structures come with a
parameter~$k$ and the extensions allow additional ``space''  of size
$f(k)$ (in the form of an additional structure of this size) or
additional time  $f(k)$ (in the form of iterations of formulas) or
both. The resulting classes are compared with their non-dynamic
counterparts and other classes. The main part of the paper explores
the applicability of methods for parameterised algorithms to this
setting through case studies for various well-known parameterised
problems.  
 \end{abstract}

 \section{Introduction}\label{section:introduction}
Parameterised complexity studies aspects of problems that make them computationally hard. %
The main interest has been in the class \FPT %
which subsumes all problems that can be solved in time $f(k) \textsf{poly}(|x|)$ for an input $x$ with a \emph{parameter} $k \in \N$ and a computable function~$f$.
In recent work, much smaller parameterised classes have been studied,
derived from classical classes in a uniform way by replacing the
requirement of a polynomial bound of e.g.~the circuit size (time, space, \dots, respectively) by a bound of the
form $f(k) \textsf{poly}(|x|)$.   In this fashion classical circuit
classes $\AC^i$ and $\NC^i$ naturally translate to parameterised
classes $\ParaAC^i$ and $\ParaNC^i$.  The lowest of these classes,
$\ParaAC^0$ corresponds to the class $\AC^0$ of problems computable by
uniform families of constant-depth, polynomial size circuits with $\wedge$-,
$\vee$- and $\neg$-gates of unbounded fan-in \cite{ElberfeldST15,BannachST15}. 

This paper adds the aspect of changing inputs and dynamic
maintenance of results  to the exploration of the landscape between  $\ParaAC^0$ and \FPT.

The study of low-level complexity classes under dynamic aspects was
started in \cite{PatnaikI97, DongST95} in the context of dynamically
maintaining the result of database queries. Similarly, as for
\emph{dynamic algorithms}, in this setting a dynamic program can make
use of auxiliary relations that can store knowledge about the current
input data (database). After a small change of the database (most
often:  insertion or deletion of a tuple), the program needs to
compute the query result for the modified database in very short
parallel time. To capture the problems/queries, for which this is
possible, Patnaik and Immerman introduced the class \DynFO
\cite{PatnaikI97}. Here, ``FO'' stands for first-order logic, which is
equivalent to $\AC^0$,  in the presence of arithmetic \cite{BarringtonIS90, ImmermanDC}.

In this paper, we study dynamic programs that have additional
resources in a ``parameterised sense''. We explore two
such resources, which can be  described as \emph{parameterised space} and \emph{parameterised time}, respectively.  For ease of exposition, we discuss these two resources in the context of $\AC^0$ first. %

One way to strengthen $\AC^0$ circuit families  is to allow circuits of \emph{size} $f(k) \textsf{poly}(|x|)$.  %
We denote the class thus obtained as $\ParaSAC^0$ (even though it corresponds to the class $\ParaAC^0$). A second dimension is to let the \emph{depth} of circuits  depend on the parameter. As the depth of circuits corresponds to the (parallel) time the circuits need for a computation, we denote the class of problems captured by such circuits by $\ParaTAC^0$. Of course, both dimensions can also be combined, yielding the parameterised class $\ParaSTAC^0$. 

Surprisingly, several parameterised versions of \NP-complete problems can even
be solved in $\ParaSAC^0$. Examples are 
the vertex cover problem and the hitting set problem parameterised by
the size of the vertex cover and the hitting set, respectively
\cite{BannachT18b}. However, classical circuit lower bounds unconditionally imply that
this is not possible for all \FPT-problems. For instance, in~\cite{BannachST15} it was observed that the
existence of simple paths of length $k$ (the parameter) cannot be
tested in $\ParaSAC^0$. Likewise, the feedback vertex set problem with
the size of the feedback vertex set as parameter cannot be solved in
$\ParaSTAC^0$. %

When translated from circuits to logical formulas, depth roughly translates into iteration of
formulas~\cite[Theorem 5.22]{ImmermanDC},
whereas size translates into the size of an additional structure by which the database is extended before formulas are evaluated. Slightly more formally, $\ParaTAC^0$ corresponds to the class $\ParaT$ consisting of problems that can be defined by iterating a formula $f(k)$ many times.  The class $\ParaSAC^0$ corresponds to the class $\ParaS$ where formulas are evaluated on structures $\db$ extended by an \emph{advice structure} whose size depends on the parameter only. In the class $\ParaST$ both dimensions are combined.
The parameterised \emph{dynamic} classes that we study in this paper are obtained from \DynFO just like the above classes are obtained from \FO: \ParaSD, \ParaTD and \ParaSTD extend \DynFO by an additional structure of parameterised size,  $f(k)$ iterations of formulas, or both, respectively. 

  As our first main contribution, we introduce a uniform framework for small dynamic,
  parameterised complexity classes (\Fref{section:setting}) based on
  advice structures (corresponding to additional space) or iterations of formulas (corresponding to additional time) and investigate how the
  resulting classes relate to each other and to other non-dynamic (and
  even non-parameterised) complexity classes (\Fref{section:relations}).

 As our second main contribution, we explore how methods for parameterised algorithms can
  be applied in this framework through case studies for various
  parameterised problems (\Fref{section:methods}). %
Due to space limitations, many proofs are delegated to the appendix.

\subparagraph*{Related work}
There is a rich literature on parameterised   dynamic algorithms, e.g.~\cite{HartungN13, downey2014dynamic, MansM17, BockenhauerBRR18reopt, AlmanMW17}. %
Closer to our work
is the investigation of (static) parameterised small (parallel)
complexity classes that was initiated
$20$ years ago in \cite{CesatiI98}. %
Later, in~\cite{ElberfeldST15}, parameterised
versions of space and circuit classes were defined
and several known parameterised problems were
shown to be complete for these classes.
Also in \cite{BannachST15} it was shown,
by applying the colour-coding technique, that
several parameterised problems belong in
\ParaACz. Furthermore Chen and Flum~\cite{ChenF16}
presented some unconditional proofs showing that
some parameterised problems do not belong
in \ParaACz. 

The descriptive complexity of parameterised classes
has also been investigated in the past. For example
Flum and Grohe~\cite{FlumG03} and
Bannach and Tantau~\cite{BannachT19}
presented syntactic descriptions of parameterised
complexity classes using logical formulas. Additionally
Chen, Flum and Huang~\cite{ChenFH17} showed that the $k$-slices
of several problems can be defined using
\FO-formulas of quantifier rank independent of $k$
and explored the connection between the quantifier rank of \FO-sentences and the depth of \ACz-circuits.

 \section{Preliminaries}\label{section:preliminaries}
By $[n]$ we denote the set $\{1, \ldots , n\}$.  %
We assume familiarity with first-order logic \FO   and refer to \cite{Libkin04} for  basics of finite model theory. A \emph{(relational) schema} $\schema$ consists of a set of relation symbols with a corresponding arity. %
A \emph{structure} $\db$ over schema $\schema$ with domain $\domain$ has, for every relation symbol $R \in \schema$, a relation over $\domain$ with the same arity as $R$. Throughout this work domains are finite.  
A \emph{$k$-ary query} $\query$ on $\schema$-structures is a mapping that assigns a subset of $\domain^k$ to every $\schema$-structure over domain~$\domain$ and commutes with isomorphisms. Each first-order formula $\varphi(\tpl x)$ over schema $\schema$ defines a query $\query$ whose result on a $\schema$-structure $\db$ is $\{\tpl a \mid \db \models \varphi(\tpl a)\}$. %
Queries of arity 0 are also called \emph{Boolean queries} or \emph{problems}. %

We mainly consider first-order formulas that have access to arithmetic, that is to a linear order $<$ on the domain as well as suitable, compatible addition $+$ and multiplication $\times$. We require  that the result of the formulas is invariant\footnote{In our scenario it is not relevant that invariance is undecidable for first-order formulas.} under the choice of the linear order~$<$. This logic is referred to as \emph{order-invariant first-order logic with arithmetic} and denoted by \FOar. In linearly ordered domains, we often identify domain elements with natural numbers, the smallest element representing 1. %

\subparagraph*{Dynamic Complexity}
We work in the dynamic complexity framework as introduced by Patnaik and Immerman \cite{PatnaikI97}, and refer to \cite{SchwentickZ16} for details. In a nutshell, dynamic programs answer a query for an input structure that is subjected to a sequence of changes. To this end they maintain an auxiliary structure using logical formulas. 

 By $\Delta_\schema$ we denote the set of \emph{single-tuple change operations} for a schema $\schema$, which consists of the insertion operations $\ins_R$ and the deletion operations $\del_R$ for each relation $R \in \schema$.  For example, $\ins_E(a,b)$ could add edge $(a,b)$ to a graph.
 A \emph{dynamic query} $(\query, \Delta)$ consists of a query~$\query$ over some input schema $\inpSchema$ and a set $\Delta \subseteq \Delta_\inpSchema$. Later on we will sometimes consider slightly more general change operations.
 
A \emph{dynamic program} $\prog$ for a dynamic query $(\query, \Delta)$ continuously answers  $\query$  on an \emph{input structure}~$\inp$ over some \emph{input schema} $\inpSchema$ under changes of the input structure from $\Delta$.   The domain $\domain$ of $\inp$ is fixed and in particular changes cannot introduce new elements.\footnote{We note that this is not a severe restriction, see e.g.~\cite[Theorem 17]{DattaKMSZ18}.} The program $\prog$  maintains an \emph{auxiliary structure}~$\aux$ over some \emph{auxiliary schema} $\auxSchema$ with the same domain as $\inp$. We call $(\inp, \aux)$ a \emph{state} of $\prog$ and consider it as one relational structure.
The auxiliary structure includes one particular \emph{query relation} $\qRel$ that is supposed to contain the answer of $\query$ over $\inp$.
For each   auxiliary relation $S \in \auxSchema$ and each change operation $\delta \in \Delta$, \prog has an update rule that specifies how $S$ is updated after a change. It is of the form \changeRule{$\delta(\tpl p)$}{$S(\tpl x)$}{$\uf{S}{\delta}{\tpl p}{\tpl x}$} where the \emph{update formula}  $\uf{S}{\delta}{\tpl p}{\tpl x}$ is a formula over~$\inpSchema \cup \auxSchema$. For example, if the tuple $\tpl a$ is inserted into an input relation $R$, each auxiliary relation $S$ is replaced by the relation $\{ \tpl b \mid (\inp, \aux) \models \uf{S}{\ins_R}{\tpl a}{\tpl b} \}$.
By $\alpha(\inp)$  we denote the input structure that results from $\inp$ by applying a sequence $\alpha$ of changes, and by $\prog_\alpha(\inp, \aux)$ the state $(\alpha(\inp), \aux')$ of $\prog$ that results from $(\inp, \aux)$ after processing~$\alpha$.
The dynamic program $\prog$ \emph{maintains} $(\query, \Delta)$ if the relation $\qRel$ in $\prog_\alpha(\inp_0, \aux_0)$ equals the query result $\query(\alpha(\inp_0))$, for each sequence $\alpha$ of changes over $\Delta$, each initial input structure $\inp_0$ with arbitrary (finite) domain and empty relations, and the auxiliary structure $\aux_0$ with empty relations.

The class \DynFO is the set of dynamic queries that can be maintained by a dynamic program with first-order update formulas. The class \DynFOar is defined analogously via \FOar update formulas.
We note that in the case of \DynFOar, we consider the arithmetic relations to be part of the input structure \inp, but they can not be modified. Technically, an additional schema \aritSchema contains the arithmetic predicates and the update formulas are over $\inpSchema \cup \auxSchema \cup \aritSchema$.
Note that \aritSchema cannot be used for defining a query. %

\subparagraph*{Parameterised Complexity}
A \emph{parameterised query} is a pair $(\query, \kappa)$, where $\query$ is a query over some schema $\tau$ and $\kappa$ is a function, called the \emph{parameterisation}, that assigns a parameter from~$\N$ to every $\tau$-structure. The well-known parameterised complexity class \FPT contains all Boolean parameterised queries $(\query, \kappa)$ having an algorithm that decides for each $\tau$-structure $\db$ whether $\db \in \query$ in time $f(\kappa(\db)) |\db|^c$, for some constant $c$ and computable function $f \colon \N \to \N$~\cite{DowneyF95}. Like \cite{BannachT18},  we demand that $\kappa$ is first-order definable, which is always the case if the parameter is explicitly given in the input.\tsm{Discussion of parameter treewidth in commments.}

\begin{example}\label{ex:vc-tree-fpt}
  \vCover is a well-studied parameterised query. Formally it is the set $\query$ of pairs $(G,k)$, where $G$ is an undirected graph that has a vertex cover of size $k$, together with the parameterisation $\kappa \colon (G,k) \mapsto k$. In  more accessible notation:
\paraProblemDescription{\vCover}{An undirected graph $G=(V,E)$ and $k \in \N$}{$k$}{Is there a set $S\subseteq V$ such that $|S|=k$ and $u \in S$ or $v \in S$ for every $(u,v) \in E$?}

  The search-tree based algorithm for \vCover is a classical
  parameterised algorithm. It is based on
  the simple observation that, for each edge $(u,v)$ of a graph, each
  vertex cover needs to contain $u$ or $v$ (or both). 
  On input $(G,k)$
  the algorithm recursively constructs the search tree as follows, starting from the
  root of an otherwise empty tree. If $E$ is empty
  it accepts, otherwise it rejects if
  $k=0$.     If $k>0$  it chooses some edge $(u,v)\in E$, labels the
  current node with $(u,v)$,
  and constructs two new tree nodes below the current node. It then
  continues recursively, from both children starting from the instance
  $(G-u,k-1)$ in the first child, and from  $(G-v,k-1)$ in the second
  child. The algorithm accepts if any of its branches accepts.
  Since the inner nodes of the tree have two children and its depth is
  bounded by $k$, it can have at most $2^{k+1}-1$ tree nodes. The
  overall running time can be bounded by  $\bigO(2^kn^2)$. Thus
  $\vCover\in\FPT$.
  \qed
\end{example}

\tsm{I tentatively removed three paragraphs on \para-classes here. Some of that might come back, maybe somewhat later. }

\section{A Framework for Parameterised, Dynamic Complexity}\label{section:setting}
We first present a uniform point of view on parameterised first-order logic. %
As explained in the introduction, formulas can be parameterised with respect to (at least) two dimensions:   additional time by iterating formulas with the number of iterations depending on the parameter; additional space by advice structures whose size depends on the parameter. 

A \emph{first-order program} $\FOprog$ over schema $\schema$ is a tuple $(\Psi, \varphi)$ where $\Psi$ is a set of \FOar-formulas over schema $\schema \uplus \schema_\Psi$ and $\varphi \in \Psi$ is supposed to compute the final result of the program. Here, $\schema_\Psi$ is a schema that contains a fresh relation symbol $R_\psi$ for each formula $\psi \in \Psi$ of the same arity as $\psi$. 
The semantics of $\FOprog$ on a $\schema$-structure $\db$ is based on  inductively  defined $\schema_\Psi$-structures $\db^{(\ell)}_\Psi$.  Initially, in~$\db^{(0)}_\Psi$, all relations $R^{(0)}_\psi$ are empty. 
The \emph{$\ell$-step result} $\db^{(\ell)}_\Psi$ of $\FOprog$, for $\ell>0$, is defined via $R^{\ell}_\psi \df \{ \tpl a \mid (\db, \db^{(\ell-1)}_\Psi) \models \psi(\tpl a)\}$. Finally, the \emph{result} $\FOprog(\db)$ is $R^{(\ell)}_\varphi$  if $\db^{(\ell-1)}_\Psi=\db^{(\ell)}_\Psi$, for some $\ell$. In this case, we say that the program reaches a fixed point after $\ell$ steps. Otherwise, $\FOprog(\db)$ is the empty set. 

We now define how first-order programs can use advice. An \emph{$\advSchema$-advice} $\pi$ is a computable mapping from $\N$ to $\advSchema$-structures for some fixed advice schema $\schema_{\text{adv}}$. Suppose that $\FOprog$ is a first-order program over schema $\schema \uplus \advSchema$. The result of $\FOprog$ for a $\schema$-structure $\db$ with advice $\pi$ and parameter $k \in \N$ is simply the result of $\FOprog$ on the structure $\db \uplus \pi(k)$.

For two computable functions $f, g: \N \rightarrow \R$ and a parameterised query $(\query, \kappa)$ over a schema $\schema$, an \emph{$(f, g)$-parameterised first-order program for $(\query, \kappa)$} is a tuple $(\FOprog, \pi)$ where $\FOprog$ is a first-order program over schema $\schema \uplus \advSchema$ and $\pi$ is an $\advSchema$-advice such that 
\begin{enumerate}
\item the result of $\FOprog$ with advice $\pi$ is $\query(\db)$, for all $\schema$-structures $\db$;
 \item $|\pi(\kappa(\db))| \leq f(\kappa(\db))$ for all $\schema$-structures $\db$; and 
 \item  $\FOprog$ always reaches a fixed point and does so after at most $g(\kappa(\db))$ steps.
\end{enumerate}

For computable functions $f$ and $g$ let $\ParaST(f,g)$ be the class of parameterised queries definable by an $(f, g)$-parameterised first-order program. 
We note that these programs use \FOar formulas, and thus have access to arithmetic\footnote{In particular, ``$+|\db|$'' induces a correspondence between  $\db$ and $\pi(k)$.}
over the domain of $\db \uplus \pi(k)$. We do not make this explicit in our naming scheme. We use the following abbreviations:
\begin{itemize}
\item $\ParaST \df \bigcup_{f,g} \ParaST(f,g)$,
\item $\ParaS \df \bigcup_{f} \ParaST(f, 1)$, 
\item $\ParaT \df \bigcup_{g} \ParaST(0, g)$. 
\end{itemize}

The class \ParaS is in fact the same as \ParaACz, and \ParaST corresponds to the class \ParaACzu in \cite{BannachST15}. To the best of our knowledge,  $\ParaT$  has not been studied in the context of first-order logic before. %

\begin{example}
\label{ex:vc-t}
We sketch a first-order program $\FOprog = (\Psi, \varphi)$ that witnesses $\vCover \in \ParaT$. Recall the search-tree based parameterised algorithm for \vCover from Example~\ref{ex:vc-tree-fpt}. 
Intuitively, the formulas $\psi \in \Psi$ are used to traverse the search tree %
in a depth-first manner. At any moment, the auxiliary relations contain information about the path from the root to the current node. In particular, the \emph{candidate set} of the current node, i.e., the set of vertices selected along its path is available.
Each application of these formulas simulates one elementary step of the search: either a new child is added to the current path, or, if the current node has maximal depth or if all possible children were already added, the current node is discarded and a backtrack step to its parent is performed.
If the candidate set is a vertex cover, the search ends. Since each edge of the search tree needs to be traversed at most twice, $2^{k+2}$ iterative steps suffice.  
More detail is given in the appendix.
\qed %
\end{example}

\begin{toappendix}
  {\bf Continuation of Example~\ref{lem:small}.}

    We give some more details. The program $\FOprog$ uses a ternary relation $P$ and a binary relation $C$ with the intention that, at each point of the computation, $P$ contains a tuple $(i, u, v)$ if the edge $(u, v)$ is used at level $i$ on the current path of the tree  and $C$ contains a tuple $(i, w)$ if the vertex $w$ is chosen for the candidate set at level $i$ on this path. The relations are modified  as follows. The formula $\varphi$ defines the empty set as long as $C$ does \emph{not} encode a valid vertex cover. As soon as $C$ \emph{does} encode a valid vertex cover, $\varphi$ defines this set. As soon as $\varphi$ defines a non-empty set all formulas reproduce their previous result and thus a fixed-point is reached.  Before that happens, we distinguish the following cases:
\begin{itemize}
 \item If $C$ contains $\ell < k$ %
tuples, the lexicographically smallest edge $(u,v)$ with no endpoint in $C$ is selected. The tuple $(\ell+1, u, v)$ is inserted into $P$ and the tuple $(\ell + 1, u)$ is inserted into $C$.
 \item If $C$ contains $k$ tuples (but does not encode a valid vertex cover), the program backtracks to a previous decision. For this, it determines the largest $\ell$ such that $(\ell, u, v) \in P$ and $(\ell, u) \in C$. It then removes all tuples $(i, u', v') \in P$ with $i > \ell$ and all tuples $(i, u') \in C$ with $i \geq \ell$. It adds the tuple $(\ell, v)$ to $C$. If no such $\ell$ exists, the search tree was traversed completely and no vertex cover of size $k$ exists.
\end{itemize}
Each of these steps is \FOar expressible and the number of steps is bounded by  $\bigO(2^{k+2})$. \qed 
\end{toappendix}

The following lemma %
basically states that every boolean parameterised query can be answered in \ParaS on instances whose domain size is bounded by a function in the parameter.

\begin{lemma}
\label{lem:small}
Let $f \colon \N \to \N$ be a computable function and $(\query,\kappa)$ a
boolean parameterised query with decidable $\query$.
There is a computable function $g$ and a $(g,1)$-parameterised first-order program $(\varphi, \pi)$ that answers $\query$ correctly on instances $\db$ of size at most $f(\kappa(\db))$. 
\end{lemma}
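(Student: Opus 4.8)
The plan is to use brute-force precomputation stored in the advice. Since $|\db|\le f(\kappa(\db))$ and the schema $\schema$ is fixed, for each value $k$ of the parameter there are only finitely many instances of size at most $n_{\max}=f(k)$, and their number $M(k)$ is bounded by a computable function of $k$. As $\query$ is decidable, I can enumerate all these instances, decide membership in $\query$ for each, and hard-code the answers into $\pi(k)$. The first-order formula $\varphi$ then only has to \emph{recognise} the current input $\db$ among the stored instances and read off the precomputed answer; as this is a single evaluation (time bound $1$), the resulting program lies in \ParaS, matching the claimed $(g,1)$-shape. Correctness is only required for $|\db|\le f(\kappa(\db))$, which is exactly the regime the advice covers.

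Concretely, I would let the advice schema $\advSchema$ contain: a unary relation $\mathrm{Adv}$ marking all advice elements (so that input elements are definable as $\neg\mathrm{Adv}$), a unary relation $\mathrm{Id}$ of block identifiers (one per stored \emph{yes}-instance), a binary relation $\mathrm{Dom}(c,p)$ recording which of the first $n_{\max}$ advice elements (used as position markers $1,\dots,n_{\max}$) are active in block $c$, and for every $R\in\schema$ a relation $\tilde R$ of arity $\arity(R)+1$, where $\tilde R(c,\bar p)$ says that in the instance with identifier $c$ the tuple of positions $\bar p$ lies in $R$. I would store only those instances that belong to $\query$. The advice domain needs about $n_{\max}+M(k)$ elements and the relations at most $M(k)\cdot n_{\max}^{\,\arity(R)+1}$ tuples, so $|\pi(k)|\le g(k)$ for a computable $g$ obtained uniformly from $f$ and the decision procedure for $\query$.

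For the formula, I would exploit the arithmetic and the ordering of $\db\uplus\pi(k)$, in which the input elements precede the advice elements. Define $n\df|\db|$ as the largest element satisfying $\neg\mathrm{Adv}$; then the footnote's ``$+|\db|$'' correspondence makes the $j$-th advice marker coincide with the combined element $n+j$, so the marker $\sigma(x)$ of an input element $x$ is the unique element with $\sigma(x)=n+x$, definable via the addition relation. The sentence is $\varphi\df\exists c\,\big(\mathrm{Id}(c)\wedge \mathrm{Match}(c)\big)$, where $\mathrm{Match}(c)$ asserts (i) the active positions of block $c$ correspond exactly to the input domain, i.e.\ $\forall x\,(\neg\mathrm{Adv}(x)\to \mathrm{Dom}(c,\sigma(x)))$ and $\forall p\,(\mathrm{Dom}(c,p)\to\exists x(\neg\mathrm{Adv}(x)\wedge \sigma(x)=p))$, so that the stored domain size equals $n$; and (ii) for every $R\in\schema$ and all input tuples $\bar x$, $R(\bar x)\leftrightarrow \tilde R(c,\overline{\sigma(x)})$. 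Thus $\varphi$ holds iff $\db$ equals some stored yes-instance, i.e.\ iff $\db\in\query$, whenever $|\db|\le n_{\max}$; for larger inputs no block has matching domain size and $\varphi$ is simply false, which is harmless.

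The main obstacle is precisely the first-order definability of this ``recognition'': one must avoid ever computing a numeric index of the whole input (a summation that is not first-order), and instead reduce the lookup to an \emph{existential guess of a block followed by a tuple-wise equality test}, which is manifestly first-order. The delicate points are defining $n=|\db|$ and realising the element-to-marker correspondence $\sigma(x)=n+x$ through arithmetic on the combined domain, together with the size-check in $\mathrm{Match}(c)$ that prevents a large input from spuriously matching a smaller stored instance. Once these are in place, $(\varphi,\pi)$ is a first-order program with advice of size $g(\kappa(\db))$ that reaches its result in a single step, hence a $(g,1)$-parameterised first-order program, as required.
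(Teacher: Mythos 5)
Your proposal is correct and follows essentially the same approach as the paper: the advice is a brute-force lookup table of all small instances (the paper stores a canonical enumeration of graphs over $[f(k)]$ in a relation $E'$ plus a marker relation $F$ for yes-instances), and the formula identifies the input among the stored instances using the arithmetic-induced correspondence between input and advice elements, then reads off the precomputed answer. Your version merely differs in inessential implementation details (storing only yes-instances as blocks, handling general schemas, and making the domain-size check explicit), the last of which is in fact spelled out more carefully than in the paper's graph-specific sketch.
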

\begin{proof}[Proof idea] We explain the proof idea for input
  structures consisting of a graph $G$ of size $n$ and a parameter
  value $k$ with $n\le f(k)$. 
  The advice $\pi$ produces an advice structure with domain
  $[2^{f(k)^2}]$. It has a ternary relation $E'$ that contains, for every $i\in
  [2^{f(k)^2}]$ all tuples $(i,j_1,j_2)$, for which the $i$-th graph
  over $[f(k)]$ in some canonical enumeration has an edge
  $(j_1,j_2)$. It further contains a unary relation $F$ that contains
  all numbers $i$, for which the $i$-th graph is a yes-instance of
  $\query$.  
  The formula $\varphi$ simply determines with the help of $E'$ and built-in arithmetic %
  the number $i$  of $G$ (as a
  graph over $[n]$) and tests whether $F(i)$ holds. 
\end{proof}
\subparagraph*{Parameterised Dynamic Complexity}

We study parameterised queries in a dynamic context. Formally, a \emph{dynamic parameterised query} $(\query, \kappa, \Delta)$ consists of a parameterised query $(\query, \kappa)$ and a set $\Delta$ of change operations. We say that a parameterised query $(\query, \kappa)$ has an \emph{explicit parameter}, if $\query$ consists of pairs $\inp=(\inp',k)$, where $\inp'$ is a structure, $k$ is a suitably encoded number, and $\kappa(\inp) = k$. %
All concrete parameterised queries we consider in this paper have an explicit parameter.\nilsm{check in the end} %
For example, we often consider the dynamic variant $(\vCover, \Delta_E \cup \pmone)$ of the parameterised vertex cover query, where $\Delta_E\df\{\ins_E, \del_E\}$ and $\pmone \df \{\pone, \mone\}$ denotes the set of  change operations that increment or decrement the given number $k$ by one, as long as $k$ stays in the admissible range. 
So, given some graph $G$ with $n$ vertices, $\pone(G,k) \df (G, k+1)$ if $k < n$, and $\mone(G,k) \df (G,k-1)$ if $k > 1$, and otherwise the changes have no effect. 

For most queries\footnote{The only exception is \knapsack in Section~\ref{subsec:dynprogramming}. } in this paper only parameter values in $\{1,\ldots,n\}$ are meaningful and we only allow such values. %
They can be represented by elements of the domain. 

Similarly as parameterised first-order programs generalise first-order formulas, parameterised dynamic programs extend conventional dynamic programs in two directions: (1) they may use an advice structure whose size depends on the parameter, and (2) they may use first-order programs of parameterised iteration depth. %

A \emph{dynamic program with iteration and advice} is a tuple $(\prog, \pi)$ where $\prog$ is a dynamic program where auxiliary relations are updated with first-order programs  and $\pi$ is an $\advSchema$-advice for an advice schema $\advSchema$. 
For a dynamic parameterised query $(\query, \kappa, \Delta)$, the program~$\prog$ has update rules of the form $\changeRule{\delta(\tpl p)}{S(\tpl x)}{(\Psi_S, \varphi_S)}$  for every $\delta \in \Delta$, where $(\Psi_S, \varphi_S)$ is a first-order program over schema $\inpSchema \cup \auxSchema \cup \advSchema$ such that $\varphi_S$ has the same arity as $S$. States of the program ~$\prog$ are of the form $(\domain \uplus \domain_{\text{adv}}, \inp, \aux, \aux_\text{adv})$ where $\inp$ is the input structure, $\aux$ the auxiliary structure, and $\aux_{\text{adv}}$ is an advice structure over a schema~$\advSchema$. Tuples of the auxiliary structure $\aux$ may range over the domain $\domain \uplus \domain_{\text{adv}}$. 

For two computable functions $f, g: \N \rightarrow \R$, an \emph{$(f, g)$-parameterised dynamic program} is a dynamic program $(\prog, \pi)$ with iteration and advice such that $|\pi(k)| \leq f(k)$ for all $k \in \N$ and all first-order programs of $\prog$ always reach a fixed point after at most $g(\kappa(\inp))$ steps. 
The initial state of such a program depends on an initial  input structure $\inp_0$ and a number $k \in \N$. It is given as $(\domain \cup \domain_{\text{adv}}, \inp_0, \aux_0, \aux^k_{\text{adv}})$ where $\aux^k_{\text{adv}} \df \pi(k)$, $\domain$ and  $\domain_{\text{adv}}$ are the domains of $\inp_0$ and $\pi(k)$, respectively, and $\aux_0$ is an empty $\auxSchema$-structure. 

A dynamic parameterised query $(\query, \kappa, \Delta)$ is maintained by $(\prog, \pi)$ if a distinguished relation $\qRel$ in $\prog_\alpha(\domain \cup \domain_{\text{adv}}, \inp_0, \aux_0, \aux^k_{\text{adv}})$ equals $\query(\alpha(\inp_0))$, for all empty\footnote{For queries with explicit parameter, we require only that in $\inp_0=(\inp'_0,k)$, $\inp'_0$ is empty, but $k$ can be non-zero.} input structures~$\inp_0$, all $k \in \N$, and all sequences $\alpha$ of changes over $\Delta$ such that $\kappa(\alpha'(\inp_0)) \leq k$ for all prefixes $\alpha'$ of $\alpha$.  
So, the dynamic program $(\prog, \pi)$ only needs to maintain $(\query, \kappa, \Delta)$ as long as the parameter value is bounded by the initially given number $k$; nevertheless the program needs to work for arbitrary values of~$k$.
We denote this number $k$ in the following as $\kmax$.

For computable functions $f, g: \N \rightarrow \R$ we define $\ParaSTD(f,g)$ as the class of dynamic parameterised queries that can be maintained by an $(f, g)$-parameterised dynamic program. We define:
\begin{itemize}
\item $\ParaSTD \df \bigcup_{f,g} \ParaSTD(f,g)$,
\item $\ParaSD \df \bigcup_{f} \ParaSTD(f,1)$, 
\item $\ParaTD \df \bigcup_{g} \ParaSTD(0,g)$, 
\end{itemize}

Since the purpose of this article is to explore the basic principles of parameterised, dynamic complexity, we keep the setting simple, in particular with respect to the following two aspects.
First, dynamic programs get a bound $\kmax$ for the parameter values at initialisation time and the program then only needs to deal with changes that obey this parameter bound. 
This ensures that the advice structure does not change throughout the dynamic process. %
Second, we  assume the presence of arithmetic throughout. In non-parameterised dynamic complexity, it is known that under mild assumptions on the query, arithmetic relations  can be constructed by a dynamic program on the fly \cite{DattaKMSZ18}. Similar techniques can be applied for the parameterised setting, yet we ignore this aspect here and assume that $\inp_0 \uplus \pi(k)$ comes with relations  $<$, $+$, and $\times$ over $\domain \uplus \domain_{\text{adv}}$.

For some first intuition we provide a parameterised dynamic program that shows that $(\vCover, \{\ins_E\} \cup \pmone)$ is in \ParaSD %
via the search-tree based approach. This result is not surprising, as it is known that $\vCover \in \ParaS$ \cite{ChenFH17, BannachT18b}. However, the dynamic program for maintaining search trees is conceptually very simple.

\begin{example}\label{ex:vc-tree-dyn}
 We recall the search-tree based parameterised algorithm for \vCover from Example~\ref{ex:vc-tree-fpt}.  The first-order program  of Example~\ref{ex:vc-t} witnesses $\vCover \in \ParaT$ (and thus also in \ParaTD) by constructing a search tree from scratch. In contrast, a dynamic program witnessing $(\vCover, \{\ins_E\} \cup \pmone) \in \ParaSD$ can \emph{maintain} a  search tree. To this end, for a given bound $\kmax$, its advice structure $\aux^{\kmax}_{\text{adv}}$ stores a full binary ``background'' tree $T$ of depth $\kmax$. 
Its auxiliary structure represents the actual search tree $T'$ by maintaining an upward closed set of nodes and the candidate sets of each of those  nodes. As in the search tree algorithm from Example~\ref{ex:vc-tree-fpt}, in every inner node $x$ of $T'$ a branching on the endpoints of some edge $e$ of $G$ is being simulated and in each of $x$'s two children one vertex of $e$  is added to the candidate set. A  node $x$ of $T$ is a leaf of $T'$, if the assigned candidate set of $x$ is an actual vertex cover of $G$ or if $x$ is in level $\kmax$ of $T$. 
The program then only needs to check whether there is a leaf representing a valid vertex cover at a level below the current value of $k$. Maintenance under changes from $\pmone$ is therefore easy.

Maintaining $T'$ under insertion of an edge $(u,v)$ is easy as well: for each leaf of $T'$ that is \emph{not} at level $\kmax$, and whose candidate set does not cover $(u,v)$, the program adds $u$ to the left child and $v$ to the right child (assuming $u<v$). Leaves at level $\kmax$ are not modified, but it might happen that a former vertex cover attached to such a leaf becomes invalid by not covering $(u,v)$.
Maintaining $T'$ under edge \emph{deletions} is slightly more subtle and will be considered  in the proof of Proposition~\ref{prop:vcs}. \qed
\end{example}

 \section{Relationships between Parameterised Classes}\label{section:relations}

In this section we examine how parameterised dynamic and static
complexity classes relate to each other. These relationships are
summarised in Figure~\ref{fig:classes}.  
 
\newcommand{\incomparable}{\not\mathrel{\substack{\subset\vspace{-0.15em}\\=\vspace{-0.15em}\\\supset}}}

\newtcolorbox{staticClassBox}[2][]{colback=teal!5!white, colframe=teal!70, width=3.82cm, left=1mm, top=2mm, bottom=2mm, right=1mm, halign title=flush center, title=#2,fonttitle = \scriptsize, fontupper = \scriptsize, fontlower = \scriptsize #1}
\newtcolorbox{dynamicClassBox}[2][]{colback=blue!5!white, colframe=blue!70, width=3.82cm, left=1mm, top=2mm, bottom=2mm, right=1mm, halign title=flush center, title=#2,fonttitle = \scriptsize, fontupper = \scriptsize, fontlower = \scriptsize #1}

     \begin{figure}[t]
        \centering
          \begin{tikzpicture}[
            very thick,
            xscale=0.9,
            yscale=.7,
            classBoxNC/.style = {rectangle, rounded corners = 5pt, minimum width=3.6cm, minimum height = 0.5cm, color=white, font=\scriptsize},
        staticClassBoxNC/.style = {classBoxNC, fill=teal!70},
        dynamicClassBoxNC/.style = {classBoxNC, fill=blue!70},
        inclusion/.style = {shorten <=1pt, shorten >=1pt},
        incomparable/.style = {inclusion, dashed},
        notSubset/.style = {-stealth, dotted, inclusion}
          ]
                         
          \node[staticClassBoxNC] (FPT) at (0,5.2) {\FPT};    
          \node[dynamicClassBoxNC] (STD) at (0,4) {\ParaSTD};
          
          \node (SD) at (-5.0,2.5) {
            \begin{dynamicClassBox}{\ParaSD}
              \begin{itemize}
                \item $\knapsack$
                \item $\longpath$
              \end{itemize}
            \end{dynamicClassBox}
          };      

          \node (TD) at (5,2.5) {
            \begin{dynamicClassBox}{\ParaTD}
            \begin{itemize}
                \item \fVS
             \end{itemize}
            \end{dynamicClassBox}
          }; 

          \node (ST) at (0,0) {
            \begin{staticClassBox}{\ParaST}
              \begin{itemize}
               \item \st{\fVS}
             \end{itemize}
            \end{staticClassBox}
          }; 
            
          \node (S) at (-5,-2.5) {  
            \begin{staticClassBox}{\ParaS}
              \begin{itemize}
              \item \vCover
  \item $\{(Q, \kappa)\mid \kappa(x)=|x|,$\\ \mbox{ }\hfill $ Q \text{ decidable}\}$
  \end{itemize}
             \tcblower
              \begin{itemize}
                \item \st{\longpath}
             \end{itemize}
            \end{staticClassBox}
          };
            
          \node (T) at (5,-2.5) {
            \begin{staticClassBox}{\ParaT}
              \begin{itemize}
                \item \vCover
                \item \cString
                \item \longpath
              \end{itemize}
            \end{staticClassBox}
          };
          
          \node (PSPACE) at (5,5) {
            \begin{staticClassBox}{}
             \begin{itemize}
              \item\mbox{$\{(Q,\kappa) \mid Q \in \PSPACE\}$}
             \end{itemize}
            \end{staticClassBox}
          };

          \draw[inclusion] (TD) -- (PSPACE);    
          \draw[incomparable](S.south) -| (-5,-5) -| (8,-5) |- node [pos = 0.4, shift=(180:0.3cm)] {$\incomparable$} (PSPACE);   
            
            \draw[inclusion] (STD) -- (FPT);
            \draw[inclusion] (TD) -- node [shift=(45:0.3cm)] {$\neq$}(STD);
            \draw[inclusion] (SD) -- (STD);
            \draw[inclusion] (T) -- node [shift=(45:0.3cm)] {$\neq$} (ST);
            \draw[inclusion] (S) -- node [shift=(125:0.3cm)] {$\neq$}(ST);
            \draw[inclusion] (ST) -- node [pos=0.2, shift=(0:0.3cm)] {$\neq$} (STD);
            \draw[inclusion] (S) -- node [shift=(0:0.3cm)] {$\neq$} (SD);
            \draw[inclusion] (T) -- node [shift=(0:0.3cm)] {$\neq$}(TD);
            \draw[notSubset] (SD) -- node [pos=0.7, shift=(90:0.3cm)] {$\not\subseteq$}(TD);
            \draw[notSubset] (SD) -- node [pos=0.4, shift=(45:0.3cm)] {$\not\subseteq$}(ST);            
            \draw[incomparable] (TD) -- node [pos=0.4, shift=(135:0.3cm)] {$\incomparable$} (ST);
            \draw[incomparable] (T) -- node [pos=0.2,shift=(90:0.3cm)] {$\incomparable$} (S);
                        \draw[incomparable, bend left = 27] (TD) edge node [pos=0.2,shift=(135:0.3cm)] {$\incomparable$} (S);
          \end{tikzpicture}
        \caption{
            Inclusion diagram of the main classes. Solid lines indicate %
            inclusions. Dashed lines marked with $\incomparable$ indicate that the two classes are incomparable. A directed, dotted edge marked with $\not\subseteq$ from  $\calC$ to  $\calC'$ indicates $\calC \setminus \calC' \neq \emptyset$. 
			If $\calC$ is a dynamic class and $\calC'$ a
                        static class, $\calC \subseteq \calC'$ means that for each $(Q, \kappa, \Delta) \in \calC$ with exhaustive $\Delta$ it holds that $(Q, \kappa) \in \calC'$, and $\calC' \subseteq \calC$ means that for each $(Q, \kappa) \in \calC'$ it holds that $(Q, \kappa, \Delta) \in \calC$, for arbitrary $\Delta$.         
            \label{fig:classes}
          }
       \end{figure}
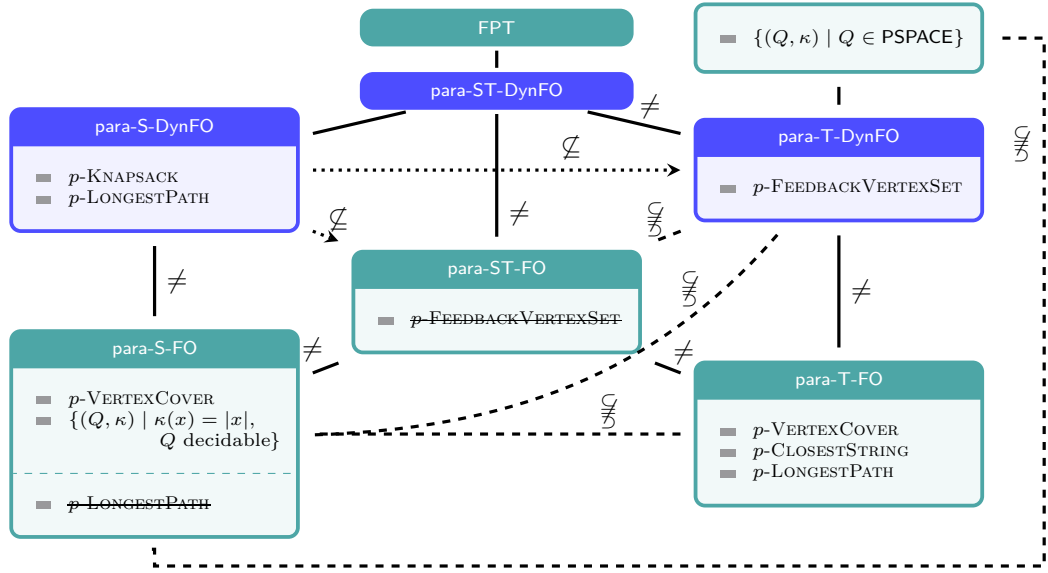

As a sanity check,  we show first that 
every parameterised query $(Q, \kappa)$ with $(Q, \kappa, \Delta) \in
\ParaSTD$ is in \FPT.  For queries in
\ParaTD the respective algorithm only needs polynomial space.
Both statements require that $\Delta$ is \emph{exhaustive}, i.e., that
it contains the single-tuple insertion operation
$\ins_R$ for every input relations $R$. This ensures that 
every possible input structure for $Q$ can be obtained by a change
sequence.\footnote{Clearly, a more
  general definition would be possible here, but we avoid that in the
interest of simplicity.} 
 
     \begin{propositionrep}\label{proposition:classes:upper}
       \begin{enumerate}[(a)]
       \item For every $(Q, \kappa, \Delta) \in \ParaSTD$ with exhaustive $\Delta$ it holds that $(Q, \kappa) \in \FPT$.
       \item For every $(Q,\kappa,\Delta) \in \ParaTD$ with exhaustive
         $\Delta$, the parameterised query $(Q, \kappa)$ can be
         solved by an \FPT-algorithm that uses at most polynomial space with respect to the input size.  In particular, $Q\in\PSPACE$.
       \end{enumerate}
     \end{propositionrep}
     \begin{appendixproof}
       We only sketch the proof.
\begin{enumerate}
 \item Let $(Q,\kappa,\Delta)$ be a parameterised query in $\ParaSTD(f,g)$, as witnessed by some parameterised dynamic program $\prog$ with advice $\pi$ obeying $|\pi(k)| \leq f(k)$ and update programs with iteration depth $g(k)$, for some computable functions $f, g$. Given an instance $\inp$ of size $n$, an \FPT-algorithm for $(Q, \kappa)$ first computes $\pi(\kappa(\inp))$ and then simulates $\prog$ for a sequence of insertions that constructs $\inp$ from an initially empty structure. In each one of these polynomially many update steps the algorithm needs to evaluate a first-order program for each auxiliary relation. Each of the $g(k)$ iteration steps in these evaluations requires $(n + f(\kappa(\inp)))^c$ time for some constant $c \in \N$. All in all this yields an \FPT-running time. 
 \item In the case of $\ParaTD$, the \FPT-algorithm sketched in  (a) essentially only needs to store the current auxiliary relations at any point in time, which amounts to polynomial space in $n$. %
\qedhere
\end{enumerate}
\end{appendixproof}  

Statement (b) does not hold for parameterised classes with advice,
as we formalise with the next proposition, which is an immediate consequence of Lemma \ref{lem:small}.

\begin{proposition}\label{prop:paras-n}
Every parameterised query $(Q, \kappa)$ with decidable $Q$ and $\kappa(x)=|x|$ is in $\ParaS$.
\end{proposition}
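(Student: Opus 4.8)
The plan is to obtain the statement as a direct instantiation of Lemma~\ref{lem:small}, choosing for its computable function the identity $f(k) \df k$. For this $f$ and the given decidable $Q$, the lemma hands us a computable function $g$ together with a $(g,1)$-parameterised first-order program $(\varphi,\pi)$ that answers $Q$ correctly on every instance $\db$ satisfying $|\db| \le f(\kappa(\db))$.

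The key observation is that the hypothesis $\kappa(x)=|x|$ renders this size guard vacuous: for every instance $\db$ we have $|\db| = \kappa(\db) = f(\kappa(\db))$, so the inequality $|\db| \le f(\kappa(\db))$ holds trivially. Hence the program $(\varphi,\pi)$ supplied by Lemma~\ref{lem:small} in fact computes $Q$ correctly on the whole class of instances, not merely on the small ones. It then remains only to read off the class membership: a $(g,1)$-parameterised first-order program witnesses $(Q,\kappa) \in \ParaST(g,1)$, and since $\ParaS = \bigcup_{f} \ParaST(f,1)$ by definition, we conclude $(Q,\kappa) \in \ParaS$.

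All the substantive work lives inside Lemma~\ref{lem:small} — its proof builds the advice structure tabulating the answers of $Q$ on all structures of bounded size and the first-order lookup that selects the relevant entry — so no genuine obstacle remains here. The only point worth isolating is why the reduction goes through, namely that the condition $\kappa(x)=|x|$ is exactly what lets the parameter-dependent advice cover every instance: under this parameterisation all instances of a given size share the same parameter value, so the advice for that value can account for all of them at once. This is also precisely the feature that fails for slower-growing parameterisations, which is why the blanket inclusion relies on $\kappa(x)=|x|$ rather than holding unconditionally.
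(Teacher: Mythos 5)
Your proposal is correct and takes essentially the same route as the paper: the paper derives the proposition as an ``immediate consequence'' of Lemma~\ref{lem:small}, which is exactly your instantiation with $f$ the identity, noting that under $\kappa(x)=|x|$ every instance satisfies the size guard, so the $(g,1)$-parameterised program from the lemma is correct on all instances and witnesses membership in $\ParaST(g,1)\subseteq\ParaS$.
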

\begin{proposition}\label{prop:SDynS}
 For any $(Q, \kappa) \in \ParaS$ and any
 $\Delta\subseteq\Delta_\inpSchema$ (or $\Delta\subseteq\Delta_\inpSchema\cup\pmone$) it holds that
 $(Q,\kappa,\Delta) \in \ParaSD$. %
\end{proposition}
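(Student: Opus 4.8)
The plan is to exploit that membership in $\ParaS$ requires only a \emph{single} evaluation of an $\FOar$-formula on the input extended by an advice structure, so the dynamic program can simply \emph{recompute} the query answer from scratch after each change rather than maintain it incrementally. Concretely, fix an $(f,1)$-parameterised first-order program $(\FOprog,\pi)$ witnessing $(\query,\kappa)\in\ParaS$, with $\FOprog$ over $\schema\uplus\advSchema$. Because its iteration depth is $1$, the auxiliary relations of $\FOprog$ are still empty when the answer is read off, so the answer is defined by a single $\FOar$-formula $\varphi$ over $\schema\uplus\advSchema$ (namely the result formula of $\FOprog$ with all its program-relation atoms set to false), satisfying $\query(\db)=\{\tpl a\mid \db\uplus\pi(\kappa(\db))\models\varphi(\tpl a)\}$. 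The dynamic program will, after every change $\delta$, set its query relation $\qRel$ to the value of $\varphi$ on the \emph{updated} input together with the advice matching the \emph{current} parameter value. Since this computation ignores the old auxiliary structure, correctness after an arbitrary change sequence $\alpha$ is immediate from the correctness of $(\FOprog,\pi)$, and no invariant needs to be propagated.

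For the advice of the dynamic program I would store, rather than a single $\pi(k)$, the whole family $\pi(0),\dots,\pi(\kmax)$ side by side: the advice maps $\kmax$ to the disjoint union $\biguplus_{j\le \kmax}\pi(j)$, each copy tagged by the value $j$ (say, in an extra coordinate). This is needed because under changes from $\Delta_\inpSchema$, and even more so under $\pmone$, the value $\kappa$ of the current input may change, while the advice structure is frozen at initialisation; hence all possibly-needed advices must be present simultaneously. The size is bounded by $(\kmax+1)\cdot\max_{j\le\kmax}f(j)$ up to tags, a computable function $f'$ of $\kmax$, so the space budget of $\ParaSD$ is respected.

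The update rule for $\qRel$ under a change operation $\delta(\tpl p)$ is then obtained from $\varphi$ by three syntactic modifications, all expressible in $\FOar$ and thus yielding a first-order program of iteration depth $1$. First, every input atom $R(\tpl y)$ is replaced by a formula describing the post-change relation (for $\delta=\ins_R$ the atom becomes $R(\tpl y)\vee \tpl y=\tpl p$, for $\delta=\del_R$ it becomes $R(\tpl y)\wedge \tpl y\neq \tpl p$, and for $\pmone$ the input relations are left untouched); this realises $\delta(\inp)$ even though update formulas only see the old input $\inp$. Second, one defines the new parameter value $k'=\kappa(\delta(\inp))$ as a domain element: $\kappa$ is first-order definable, and applying the same substitution to its defining formula gives an $\FOar$-definition of $k'$ from the old state and $\delta$. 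Third, every advice atom $A(\tpl y)$ is redirected to the block tagged $k'$, i.e.\ replaced by $A'(k',\tpl y)$. Composing these substitutions into $\varphi$ produces the desired update formula, which shows $(\query,\kappa,\Delta)\in\ParaSTD(f',1)\subseteq\ParaSD$.

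The point that needs most care is the interaction between the frozen advice and the changing parameter value, handled above by storing all advices and letting the formula select the right block through its tag. One must additionally lay out the tagged blocks so that the arithmetic the $\ParaS$-formula relies on (the correspondence between $\db$ and $\pi(k)$ induced by $+,\times$) is still available on the selected block, which is a routine layout argument. For $\Delta\subseteq\Delta_\inpSchema$ with the parameter read off the input this simplifies, since $\kappa$ then stays equal to its initial value $\kmax$ and only the block $\pi(\kmax)$ is ever accessed.
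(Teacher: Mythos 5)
Your proposal is correct and takes essentially the same route as the paper's proof: both recompute the $\ParaS$ answer from scratch after every change and resolve the frozen-advice/changing-parameter mismatch by packaging \emph{all} advice structures $\pi(1),\ldots,\pi(\kmax)$ into the dynamic program's advice for the given bound $\kmax$. The additional details you supply (tagging the advice blocks, substituting the change into the formula, and defining the new parameter value in $\FOar$) are exactly the routine steps that the paper's proof sketch leaves implicit.
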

\begin{proofsketch}
  Let $(Q, \kappa)\in\ParaS$ by some $(f,1)$-parameterised \FO
  program \FOprog. In principle, a parameterised dynamic program can simulate
  \FOprog from scratch after each change. However, since
  the parameter of \inp might change, it might need different advice
  structures from \FOprog. However, there is an easy solution for
  this.  For the given \kmax, the dynamic program gets as its advice \emph{all} advice
  structures $\pi(1),\ldots,\pi(\kmax)$ of \FOprog.
\end{proofsketch}
The same argument can be applied for $\ParaST$ and $\ParaSTD$.

In addition to the above inclusions and those that are immediate from
the definitions, we observe the following separations between
parameterised classes (also see Figure~\ref{fig:classes}). Some proofs
are deferred to the next section.  

\begin{proposition}\label{proposition:classes:lower}
\begin{enumerate}
 \item There is a $(Q, \kappa) \in \ParaS$ such that $(Q, \kappa, \Delta) \not\in \ParaTD$, for any exhaustive $\Delta$.
 \item There is a $(Q, \kappa) \in \ParaT$ such that $(Q, \kappa) \not\in \ParaS$.
 \item There is a $(Q, \kappa, \Delta) \in \ParaTD$ with exhaustive $\Delta$ such that $(Q, \kappa) \not\in \ParaST$.
 \item There is a $(Q, \kappa, \Delta) \in \ParaSD$ with exhaustive $\Delta$ such that $(Q, \kappa) \not\in \ParaST$.
\end{enumerate}
\end{proposition}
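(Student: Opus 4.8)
The plan is to witness each separation by combining a membership result with an unconditional lower bound, namely either the $\PSPACE$ bound of Proposition~\ref{proposition:classes:upper}(b) or a known $\AC^0$-style circuit lower bound. For part~(a) I would exploit that, under exhaustive changes, $\ParaTD$ is confined to $\PSPACE$, whereas $\ParaS$ becomes all-powerful once the parameter equals the input size. Concretely, fix a decidable $Q \notin \PSPACE$ (one exists by the space hierarchy theorem, e.g.\ any $\mathsf{EXPSPACE}$-complete problem) and set $\kappa(x) \df |x|$. Then $(Q,\kappa) \in \ParaS$ by Proposition~\ref{prop:paras-n}, while $(Q,\kappa,\Delta) \in \ParaTD$ for an exhaustive $\Delta$ would force $Q \in \PSPACE$ by Proposition~\ref{proposition:classes:upper}(b), a contradiction.

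For part~(b) I would take $Q = \longpath$ with $\kappa$ the requested path length. Membership $(\longpath,\kappa) \in \ParaT$ is obtained by iterating a first-order program $g(k)$ times (this is established in Section~\ref{section:methods}), and $\longpath \notin \ParaS = \ParaACz$ is exactly the circuit lower bound recalled in the introduction \cite{BannachST15}.

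Parts~(c) and~(d) both rest on the observation that a dynamic program may distribute its ``time'' over the whole change sequence, so it can outperform the static class $\ParaST$, which offers only $f(k)$ iterations. For~(c) the natural witness is $Q = \fVS$: its membership in $\ParaTD$ is proved in Section~\ref{section:methods}, while $\fVS \notin \ParaST = \ParaACzu$ is the lower bound stated in the introduction \cite{BannachST15}. A self-contained witness that settles both (c) and (d) at once is $Q = \parity$ over a single unary relation $U$, taken with the constant parameterisation $\kappa \equiv 1$ and the exhaustive change set $\Delta = \{\ins_U, \del_U\}$. Parity is maintainable in $\DynFO$ by flipping a $0$-ary answer bit on each insertion or deletion, so $(\parity,\kappa,\Delta) \in \DynFO \subseteq \ParaTD \cap \ParaSD$. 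Conversely, for a constant parameter every $\ParaST = \ParaACzu$ program uses constant-size advice and constant iteration depth, hence describes a family of constant-depth, polynomial-size circuits, i.e.\ an $\AC^0$ query; since $\parity \notin \AC^0$ this gives $(\parity,\kappa) \notin \ParaST$, establishing~(d) (and re-proving~(c)).

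The genuinely hard work does not lie in these reductions but in the two deferred membership claims $\longpath \in \ParaT$ and $\fVS \in \ParaTD$, which require the parameterised-algorithm machinery of Section~\ref{section:methods}. The one subtlety I would flag in the write-up is why part~(d) must use a \emph{constant} parameter: if the parameter were allowed to grow with $|U|$ (say $|U| = 2^k$), the advice bound $f(k)$ could be polynomial in $|U|$ or larger, so the $\AC^0$ lower bound for parity would no longer yield a contradiction; keeping $\kappa$ constant forces the advice to be of constant size, and the $\AC^0$ lower bound then applies directly.
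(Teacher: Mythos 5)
Your proposal is correct. For parts (a) and (b) it coincides with the paper's own proof: the same witness with decidable $Q \notin \PSPACE$ and $\kappa(x)=|x|$ combined with Propositions~\ref{proposition:classes:upper} and~\ref{prop:paras-n}, and the same witness \longpath combined with Proposition~\ref{prop:longpath} and the lower bound from \cite{BannachST15}. For part (c) you also pick the paper's witness \fVS; the only difference is that you import the lower bound $\fVS \notin \ParaST$ from the literature, whereas the paper proves it directly: restricting a \ParaST-program to inputs with parameter $k=0$ would yield an \FOar-formula expressing acyclicity of undirected graphs, which is impossible. Part (d) is where you genuinely diverge. The paper takes undirected connectivity parameterised by the maximal node degree, argues it is not in \ParaST because even for fixed $k=2$ connectivity is not \FOar-expressible \cite{FurstSS84}, and places it in \ParaSD because the unparameterised query is in \DynFO; you instead take \parity over a unary relation with the constant parameterisation $\kappa \equiv 1$. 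The two arguments have the same skeleton (a \DynFO upper bound plus an $\AC^0$-style lower bound applied at a fixed parameter value), and the paper even notes in a footnote that its (d)-witness would serve for (c) as well, exactly as your parity example does. What your variant buys is self-containedness: the lower bound is the textbook fact that \parity is not in $\AC^0$, and your closing remark on why the parameterisation must be constant --- so that the advice size $f(1)$ and iteration depth $g(1)$ collapse to constants and the program unfolds into a single \FOar-formula --- is exactly the right justification. What the paper's choice buys is a more natural problem and parameterisation, a preference it states explicitly. One small point to fix in a final write-up: maintaining \parity in \DynFO requires flipping the answer bit only when the inserted (deleted) element is not yet (respectively, still) in $U$, since re-inserting a present element leaves the structure unchanged; the update formula can test this, so your claim stands as stated.
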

\begin{proofsketch}
  Part (a) is a consequence of Proposition
  \ref{proposition:classes:upper} and Proposition \ref{prop:paras-n},
  and witnessed by any parameterised problem $(Q,\kappa)$ with decidable $Q
  \not\in \PSPACE$ and $\kappa(x) = |x|$. Part (b) is witnessed by the
  problem \longpath which is not in $\ParaS$ \cite{BannachST15}, but
  in $\ParaT$ as we will see in Proposition \ref{prop:longpath}. For
  (c)  we observe that \fVS is not in \ParaST, as otherwise the
  restriction to inputs with parameter $k=0$ would yield a first-order
  formula that expresses acyclicity of undirected graphs. In
  Proposition~\ref{prop:fVSparaTD} we will show that $(\fVS, \Delta_E \cup \pmone)$ is in
  \ParaTD. The separation %
  for (d) can be shown
  with  the help of  connectivity of undirected graphs. To this end, we
  consider the parameterisation by the maximal node degree. It is
  well-known that even for fixed $k=2$ this property is not
  expressible in \FOar, see \cite{FurstSS84}, and thus it is not in $\ParaST$. On
  the other hand, towards (d), the unparameterised version is in \DynFO and thus
  the parameterised version is in $\ParaSD$.\footnote{Of course,
    this argument could have been used for (c) as well, but there we prefer
    a more ``natural'' parameterisation.}
\end{proofsketch}

 \section{Methods for Parameterised Complexity}\label{section:methods}

The goal of this section is to explore the
transferability of known methods
from the realm of parameterised algorithms to
dynamic parameterised complexity.
We are thus not always interested in ``best algorithms'' but rather
want to exemplify how sequential algorithmic methods for static problems translate into the
dynamic (highly parallel) setting.

We start by describing colour-coding, since it turns out as particularly useful in the dynamic
context and we use it in many other subsections. Then we consider three classical
methods for parameterised algorithms, bounded search trees,
kernelisation and dynamic programming. 
Afterwards we give an example for the iterated compression method, which uses an adaption of a technique from dynamic complexity.

\subsection{Colour-Coding}\label{subsec:colour-coding}

In this subsection, we establish the usefulness of the colour-coding technique, as presented in~\cite{AlonYZ95}, in
our setting by a concrete example, \longpath.
\paraProblemDescription{\longpath}{An undirected graph $G=(V,E)$, $s,t \in V$ and $\ell \in \N$}{$\ell$}{Is there a (simple) path from $s$ to $t$ of length $\ell$?}
This problem can be solved with the help of %
\emph{universal colouring families}. Such a
family is a small set of functions that map nodes to colours such that
if a path of length $\ell$ exists, one of these functions %
colours the nodes of the path with a fixed sequence of $\ell+1$ colours.
A parallel algorithm for \longpath
therefore only needs to test in parallel, for each function of a
universal colouring family, whether it produces such a coloured path from
$s$ to~$t$. 

More precisely, a \emph{$(n,k,c)$-universal colouring family}
$\Lambda$ has, for every subset $S\subseteq[n]$ of size $k$ and for every
  mapping $\mu: S \to [c]$,   at least one function
  $\lambda \in \Lambda$ with $\lambda(s)=\mu(s)$, for
  every~$s\in S$.
In \cite[Theorem 3.2]{BannachST15} a family
$\Lambda_{n,k,c}$ of
such functions is defined. The definition can be found in the
appendix.
\begin{toappendix}
 A {$(n,k,c)$-universal colouring family} $\Lambda_{n,k,c}$ can be
 constructed as follows.
  \begin{align*}
    \lambda_{p,j} (x) & \: \df \:  (jx \bmod p) \bmod k^2,\\
    \Lambda'_{n,k} & \: \df \:  \{\lambda_{p,j} \mid \text{$p$ prime},
                     p<k^2\log n, j\in[p-1]\},\\
    \Lambda_{n,k,c} & \: \df \:  \{\omega\circ \lambda_{p,j} \mid \omega:\{0,
                      \ldots,k^2 -1\}\to [c], \lambda_{p,j} \in \Lambda'_{n,k}\}.
  \end{align*}
\tsm{Describe how these functions are FO-definable.}
\end{toappendix}
In the presence of arithmetic, these functions are easily first-order
definable and can be enumerated in a first-order fashion.

\begin{propositionrep}\label{prop:longpath}
  \begin{enumerate}[(a)]
   \item $\longpath\in\ParaSD$.
   \item \label{item:longpath-paraT} 
   $\longpath\in\ParaT$.
\end{enumerate}
\end{propositionrep}
\begin{proofsketch}
  In both parts of the proof, we use the colour-coding approach as sketched above. For a graph $G$,
a colouring function $\lambda$, and a set $C$ of colours, a
\emph{$C$-coloured path  under $\lambda$} is a path
whose nodes are mapped to $C$ in a one-one fashion by $\lambda$.  

  For solving the \longpath problem with parameter $\ell$, we consider 
 the $(n,k,k)$-universal colouring family $\Lambda \df \Lambda_{n,k,k}$ with $k \df \ell +
 1$. Then  a graph has a simple path of length $\ell$ from $s$ to $t$ if
 and only if there is a $[k]$-coloured path
 from $s$ to $t$ under some $\lambda\in\Lambda$.

  We first show $\longpath\in\ParaSD$. The dynamic program uses a
  dynamic programming approach (in the classical sense of this
  term). It stores, for each $\lambda \in \Lambda$ and each pair $(u,
  v)$ of nodes, the set $\calC$ of color sets $C$, for which there is
  a $C$-coloured path from $u$ to $v$ under $\lambda$. 

  That $\longpath\in\ParaT$ can be shown with the help of the same
    universal colouring family $\Lambda$ as above, which consists of $f(k) \textsf{poly}(n)$ colourings. The idea for the
    program is to test, in $f(k)$ iterations and in each iteration for $\textsf{poly}(n)$ colourings in parallel, whether
  there is a $[k]$-coloured path from $s$ to $t$ under the current colouring.
    A suitably coloured path can be found in $k$ iterations.
  More details can
  be found in the appendix.
\end{proofsketch}
\begin{appendixproof} 
We give a more detailed description of (a). We recall that the program stores, for each $\lambda \in \Lambda$ and each pair $(u,
  v)$ of nodes, the set $\calC$ of color sets $C$, for which there is
  a $C$-coloured path from $u$ to $v$ under $\lambda$.
  At
  initialisation time, this
  information is easy to compute in a first-order fashion, since the
  initial graph is  empty. We will see
  that it can also be easily  updated. 

  Let us first state more precisely what
  is stored by the dynamic program. In order to be able to address
  subsets of $C$ as well as all $\lambda \in \Lambda$, the domain
  $\domain_{\text{adv}}$ of the advice structure is chosen as
  $[k^{k^2}]$. Subsets $C_j$ of $[k]$ are encoded by numbers $j$ in
  $[2^{k}]$,  where $i\in C_j$ if and only if the $i$-th bit of $j$ is
  $1$. All functions $\omega:\{0,\ldots,k^2-1\}\to [k]$ are stored in
  a ternary relation $L \subseteq [k^{k^2}]\times[k^2]\times[k]$ of
  the advice structure of the program where $(i,m,c)$ is in $L$ if the
  $i$-th function maps $m-1$ to $c$.  For encoding the colouring functions, recall that each $\lambda \in \Lambda$ is of the form $\omega\circ \lambda_{p,j}$, where $p \in [k^2\log n]$, $j \in [p-1]$, and $\omega:\{0, \ldots,k^2 -1\}\to [k]$. Thus each $\lambda \in \Lambda$ can be addressed by a tuple of elements from $\domain$ and $\domain_{\text{adv}}$. 

  As auxiliary relations, the dynamic program stores, for
  each~$\lambda \in \Lambda$, a relation $R_\lambda \subseteq
  [2^{k}]\times V \times V$ with the intention that $(C,u,v)$ is in
  $R_\lambda$ if there is a $C$-coloured path from $u$ to $v$.  
  Of course the program cannot store $|\Lambda|$ many auxiliary relations,
  since this number is not a constant. Therefore, it uses a relation
  $\widehat{R}$ that represents all relations $R_\lambda$,  with the
  help of additional components $\tpl a$ (consisting of elements from $\domain_{\text{adv}}$) and $\tpl p$ (consisting of elements from $\domain$) for addressing each~$\lambda$. Given $\widehat{R}$, an $\FO$-formula can easily extract pairs of nodes $(u,v)$ that are connected by a simple path of length $\ell$.

  We argue that each $\widehat{R}$ can be updated by a first-order
  program.  After inserting an edge $(u, v)$, a tuple $(C,a,b)$ is in
  $R_\lambda$   if it was in $R_\lambda$ already before the insertion or if $C = C' \uplus C''$ for some $C'$ and $C''$ such that
  $(C', a, u) \in R_\lambda$ and $(C'', v, b) \in R_\lambda$. After
  deleting an edge $(u, v)$, the relation $R_\lambda$ is updated as
  follows. Suppose $u$ and $v$ are coloured $c_u$ and $c_v$ under $\lambda$. If $c_u$ or $c_v$ is not in $C$, the status of
  $(C,a,b)$ in $R_\lambda$ does not change.
 Otherwise, $(C, a, b)$ is in the updated relation if there is
  an edge $(u',v')$ with $u' \neq u$ or $v' \neq v$ such that the
  colours of $u'$ and $v'$ are $c_u$ and $c_v$, respectively, and
  there are sets $C',C''$ such that  $(C', a, u'), (C'', v', b) \in R_\lambda$ and
  $C = C' \uplus C''$.

We next show (b), that is, that $\longpath\in\ParaT$ with the help of the same universal
colouring family $\Lambda$ as above. The idea for the program is to
test, in parallel for all $\lambda_{p, j}$, whether there is a
function  $\omega:\{0,\ldots,k^2-1\}\to [k]$ for which there is a
$[k]$-coloured path from $s$ to $t$ under the colouring
$\lambda=\omega\circ\lambda_{p,j}$.  

To this end the program cycles through all possible functions $\omega:\{0,\ldots,k^2-1\}\to [k]$. It uses a ternary relation $\Omega$ that stores a triple $(a, b,c)$ if the current function $\omega$ maps $(a, b)$ to $c$. The lexicographically next function can be defined by a first-order formula using the presence of arithmetic. For testing whether the graph with colouring $\lambda \df \omega\circ\lambda_{p,j}$ has a
$[k]$-coloured path from $s$ to $t$, the program cycles through all permutations $\pi$ of $[k]$ and  computes, for each $i$, the set of nodes that can be reached from $s$ by a path using the colours $\pi(1),\ldots,\pi(i)$, in that order. The second part can be achieved easily using an additional binary relation that is intended for storing tuples $(i, a)$ if $a$ can be reached by a path $\pi(1),\ldots,\pi(i)$, and observing that this relation can be iteratively computed by a first-order formula. For cycling through all permutations, the program actually cycles through all functions $\pi: [k] \to [k]$ and tests whether $\pi$ is indeed a permutation. Again, this can be achieved by using additional relations and suitable first-order formulas.

The number of required first-order-iterations to run this algorithm is
bounded by $\bigO(k^{k^2}k!k)$. 
\end{appendixproof}

\subsection{Bounded-depth search trees}\label{subsec:search-tree}

Bounded-depth search trees are a classical technique in parameterised
complexity. Already in Example \ref{ex:vc-tree-dyn} we outlined that
search trees are a viable tool also in the dynamic context by showing
how a search tree for $\vCover$ can be maintained under edge
insertions. Here we provide more examples. First we  extend Example
\ref{ex:vc-tree-dyn} towards edge deletions. Afterwards we
consider two further problems, for which the known search-tree based
algorithms can be adapted to place them  in $\ParaT$ or $\ParaTD$,
respectively: \cString and \fVS. Although we conjecture that these
problems are also in $\ParaSD$, we were not able to prove
it. %

\begin{proposition}\label{prop:vcs}
    $(\vCover, \Delta_E \cup \pmone)\in
\ParaSD$ by a search-tree-based dynamic program.
\end{proposition}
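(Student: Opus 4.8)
The overall claim is cheap: since $\vCover\in\ParaS$ (noted before the proposition) and $\Delta_E\cup\pmone\subseteq\Delta_\inpSchema\cup\pmone$, Proposition~\ref{prop:SDynS} already gives $(\vCover,\Delta_E\cup\pmone)\in\ParaSD$. The point of the proposition is therefore to exhibit the conceptually transparent \emph{search-tree} program, so the plan is to build on Example~\ref{ex:vc-tree-dyn} and keep every update formula first-order (one evaluation per change, with a $\kmax$-sized advice). I would reuse the advice of Example~\ref{ex:vc-tree-dyn}, a full binary background tree $T$ of depth $\kmax$, and the auxiliary encoding of the active tree $T'$: an upward-closed set of active nodes of $T$ together with, for each active node $x$, its candidate set $C_x$ and, for inner nodes, the branching edge $e_x$, maintaining the invariant that $e_x$ is an edge of $G$ left uncovered by $C_x$ and that the two children of $x$ extend $C_x$ by the two endpoints of $e_x$. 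The query relation tests whether some active node at level at most the current $k$ carries a candidate set that is a vertex cover. Insertions and $\pmone$ are taken verbatim from Example~\ref{ex:vc-tree-dyn}: inserting $(u,v)$ appends one branching level below every active leaf below level $\kmax$ whose candidate set misses $(u,v)$, and a $\pmone$-change only moves the level below which leaves are inspected; both are plainly first-order.

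For a deletion of $(u,v)$ I would first refresh the cover-marking: a candidate set that was a cover stays one, and, additionally, every active node whose candidate set covered all edges \emph{except} $(u,v)$ now becomes a cover; both tests are first-order in the pre-change graph. The genuinely new case is an inner node $x$ with $e_x=(u,v)$. If $C_x$ now covers all surviving edges, then $x$ turns into a leaf and its $T'$-subtree is deactivated, which is a first-order redefinition of the active-node set and the marking. Otherwise $C_x$ still leaves an edge uncovered, $x$ must be re-equipped with a new branching edge $e'_x$ (e.g.\ the lexicographically least edge of $G$ still uncovered by $C_x$), and the two children, their candidate sets, and the whole subtree below $x$ have to be recomputed.

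This last recomputation is where I expect the real difficulty to sit, and it is exactly the ``slightly more subtle'' part announced in Example~\ref{ex:vc-tree-dyn}. Replacing $e_x$ changes both children's candidate sets, hence the uncovered edges there, hence their branching edges, and so on: a cascade of depth up to $k$ that a single first-order step cannot unfold directly, and which is precisely what separates $\ParaSD$ (iteration depth $1$) from $\ParaTD$. My plan to stay within a single first-order step is to maintain $C_x$ and $e_x$ explicitly for \emph{all} nodes of $T$ (the left/right turns are already encoded by a node's position in $T$) and to define the refreshed relations in one shot over $T$, so that once the per-node branching edges are fixed each candidate set becomes an explicit first-order function of the node's path and of those edges; the crux is then to re-select, by one formula, all affected branching edges simultaneously, for which I would try to characterise the lexicographically least uncovered edge at each node \emph{non-recursively} from $C_x$, the node's path, and the deleted pair, using the linear order and arithmetic. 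The hard part is that this cross-level self-reference is intrinsically sequential, so if no such closed form is available the honest fallback—fully in line with the $\ParaS$ membership and the proof of Proposition~\ref{prop:SDynS}—is to recompute the \emph{answer} after a deletion by the iteration-free, advice-based $\ParaS$ procedure (a single first-order evaluation), using the incremental search-tree bookkeeping only to make the insertion and $\pmone$ updates transparent.
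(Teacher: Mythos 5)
Your reduction of the problem to the deletion case is accurate, and your diagnosis of the difficulty is exactly right; but the proposal stops at that diagnosis, and the two ways you offer to get past it both fall short. The speculative plan (a non-recursive, closed-form characterisation of the lexicographically least uncovered edge at every node simultaneously) is precisely what cannot work in general: as you yourself observe, re-selecting the branching edge at $x$ changes the candidate sets of its children, hence their branching edges, and so on down the tree --- that cascade is the reason the naive repair is not a single first-order step. The fallback (answering deletions by re-running the advice-based $\ParaS$ procedure) is formally sound for the bare membership claim, but that membership already follows from Proposition~\ref{prop:SDynS} without any search tree at all; the content of the proposition is the qualifier ``by a search-tree-based dynamic program'', and a program that abandons the search tree on every deletion does not establish it.

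The idea you are missing is the paper's repair step, which avoids re-selecting branching edges altogether. When $(u,v)$ is deleted and an inner node $x$ of $T'$ branched on $(u,v)$, the paper does not re-equip $x$ with a new edge: it promotes the subtree of the left child $y$ (the child that added $u$) to hang at $x$, deleting $u$ from every candidate set in that subtree. All branching edges below $y$ remain uncovered at their nodes after $u$ is removed (removing a vertex from a candidate set never covers more edges), so the promoted subtree is still a legal search tree; the only defects are (1) former depth-$\kmax$ leaves that moved up one level and may fail to be covers --- repaired by one ordinary branching step on the least uncovered edge --- and (2) leaves whose candidate sets lost $u$ and now miss some edges, where the crucial structural fact is that \emph{every} such missed edge is incident to $u$. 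Consequently the repair subtree at such a leaf need not be a general depth-$k$ search tree: it is a path $z_0,\dots,z_{d'}$ in which each left child adds $u$ (and is then immediately a cover, since all uncovered edges contain $u$) and each right child adds the next uncovered neighbour $w_i$ of $u$. Choosing the ordered sequence $w_1,\dots,w_{d'}$ is the only remaining definability issue, and it is resolved by the $(n,\kmax,\kmax)$-universal colouring family in the advice: pick a canonical colouring injective on the uncovered neighbours of $u$ and let $w_i$ be the vertex coloured $i$. This collapses the cascade you called intrinsically sequential into a single first-order update, which is what the proposition actually asserts.
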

\begin{proof}[Proof sketch]
    Let $T$ and $T'$ be defined as in Example~\ref{ex:vc-tree-dyn}. It
    remains to explain how edge deletions can be handled.
    If an edge $(u,v)$ is deleted, and a node $x$ of $T'$ used
    $(u,v)$ for its branching step, the induced subtree of $x$ can be
    replaced by the induced subtree of its left child $y$, see
    Figure~\ref{fig:example:vCover}.\footnote{Of course, the right
    child would work equally well.} More precisely, the children $u'$ and $v'$ of $y$ become the new children of $x$, and in all candidate sets below $u'$ and $v'$ the vertex $u$ is removed.

    \tikzset{
        treeEdgeLabel/.style={draw=none,fill=none},
        treeEdgeLabelLeft/.style={treeEdgeLabel,above left},
        treeEdgeLabelRight/.style={treeEdgeLabel,above right},
        subtree/.style={
            draw,solid,shape border uses incircle,
            isosceles triangle,shape border rotate=90,yshift=-3.35em
        }
    }

   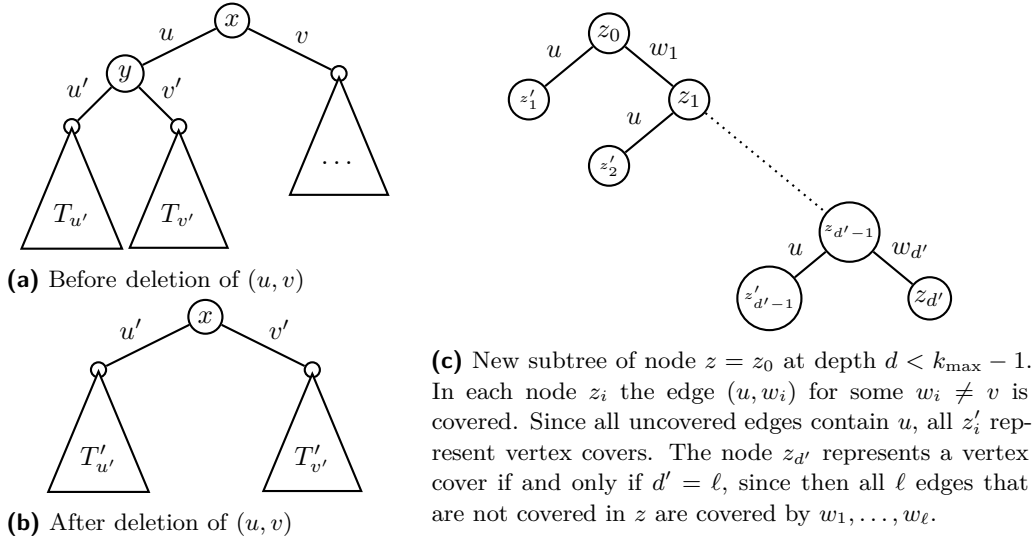
\begin{figure}
     \centering
     \begin{minipage}{.39\linewidth}
       \begin{subfigure}{.95\linewidth}
       \centering
\begin{tikzpicture}[nodes={circle,draw=black,fill=white,inner sep=2pt},thick,level distance=2em,
    level 1/.style={sibling distance=8em},
    level 2/.style={sibling distance=4em}]
    \node (i) {$x$}
        child{
            node (2i) {$y$} 
                child{
                    node (4i) {}
                    node[subtree] (4is) {$T_{u'}$}
                    edge from parent node[treeEdgeLabelLeft] {$u'$}
                }
                child{
                    node (4i2) {}
                    node[subtree] (4is) {$T_{v'}$}
                    edge from parent node[treeEdgeLabelRight] {$v'$}
                }
            edge from parent node[treeEdgeLabelLeft] {$u$}
        }
        child{
            node (2i1) {}
            node[subtree] (2i1s) {$\dots$}
            edge from parent node[treeEdgeLabelRight] {$v$}
        }
        ;
\end{tikzpicture}          \subcaption{Before deletion of $(u,v)$}
       \end{subfigure}\\
       \begin{subfigure}{.95\linewidth}
       \centering
\begin{tikzpicture}[nodes={circle,draw=black,fill=white,inner sep=2pt},thick,level distance=2em,
    level 1/.style={sibling distance=8em},
    level 2/.style={sibling distance=4em}]
    \node (i) {$x$}
        child{
            node (4i) {}
            node[subtree] (4is) {$T'_{u'}$}
            edge from parent node[treeEdgeLabelLeft] {$u'$}
        }
        child{
            node (4i2) {}
            node[subtree] (4is) {$T'_{v'}$}
            edge from parent node[treeEdgeLabelRight] {$v'$}
        }
        ;
\end{tikzpicture}          \subcaption{After deletion of $(u,v)$}
       \end{subfigure}
     \end{minipage}
     \begin{minipage}{.59\linewidth}
        \begin{subfigure}{.95\linewidth}
            \centering
\begin{tikzpicture}[nodes={circle,solid,draw=black,fill=white,inner sep=1pt,minimum size=1.5em},thick,level distance=2.5em,
    level 1/.style={sibling distance=6em},
    level 2/.style={sibling distance=6em},
    level 3/.style={sibling distance=6em}]
    \node (z) {$z_0$}
        child{
            node[] {\tiny$z_1'$}
            edge from parent node[treeEdgeLabelLeft] {$u$}
        }
           child{
              node (z1) {$z_1$}
              child{
                node {\tiny$z_{2}'$}
                  edge from parent[solid] node[treeEdgeLabelLeft] {$u$}
              }
        child{
            child{
                edge from parent[draw=white]
            }
               child{
                    node(zi1) {\tiny$z_{d'-1}$}
                    child {
                        node {\tiny$z_{d'-1}'$}
                        edge from parent[solid] node[treeEdgeLabelLeft] {$u$}
                    }
                        child{
                            node(zmax) {$z_{d'}$}
                            edge from parent[solid] node[treeEdgeLabelRight] {$w_{d'}$}
                        }
                    edge from parent[dotted] 
                }
                 edge from parent[dotted] 
             }
            edge from parent node[treeEdgeLabelRight] {$w_{1}$}
        }
        ;
\end{tikzpicture}
             \subcaption{
                New subtree of node $z=z_0$  at depth $d<\kmax-1$.
                In each node $z_i$ the edge $(u, w_i)$ for some $w_i \neq v$ is covered.
                Since all uncovered edges contain $u$, all $z_i'$
                represent  vertex covers. 
                The node $z_{d'}$ represents a vertex cover if and
                only if
                $d'=\ell$, 
                since then all $\ell$ edges that are not covered in $z$ are covered by  $w_1, \ldots, w_{\ell}$.
            }
        \end{subfigure}
     \end{minipage}

        \caption{
            Modification of the search tree for \vCover after deletion of an edge $(u,v)$.
            The new sub-trees $T'_{u'}$, $T'_{v'}$ of $x$ are obtained from $T_{u'}$, $T_{v'}$ respectively,
            by adding two new children to leaves that do not represent a vertex cover.
        }\label{fig:example:vCover}
    \end{figure}

        The subtree of $x$ might now (1) have leaves of depth $\kmax-1$ that do not represent an actual vertex cover, since the modification reduces the
    depth of all nodes in the subtree of $x$, and (2) have leaves at a smaller depth $d<\kmax{-}1$ which do not represent a vertex cover, since $u$ is removed from the candidate sets and thus edges adjacent to $u$ may not be covered any more. These defects can be corrected successively. 
    
    First, for each of the leaves from (1), two new children are added, with the help of the
    lexicographically smallest uncovered edge $(u'',v'')$. 

    Regarding a leave $z$ with property (2), observe that its candidate set can miss only
    edges of the form $(u,w)$, where $w\not=v$. It is easy to see that
    the subtree rooted at $z$ can be chosen in the following shape. 
    Let $W=\{w_1, \ldots, w_{\ell}\}$ be the set of vertices with an uncovered edge $(u,w_i),i\in [\ell]$.
    The new subtree having depth $d' = \min\{\ell,\kmax-d\}$ consists of a path with nodes
    $z_0,\ldots,z_{d'}$ such that $z_0=z$ and for each $i\ge 0$, the left child of
    $z_i$ is a leaf obtained by adding $u$ to the candidate set and for the
    right child $z_{i+1}$, $w_{i+1}$ is added to the candidate set. %

    This new subtree can be defined in a first-order fashion with the help of colour coding.
    Let $U$ be the candidate set of $z$.
    Then $W$ consists of all neighbours of $u$ that are not in $U$, so $W$ is easily FO-definable.
    To define the subtree, $d'$ vertices have to be chosen from $W$.
    To this end, we consider colourings of $W$ that map $W$ to $[\ell]$.
    With the help of an $(n,\kmax,\kmax)$-universal colouring family,
    one can quantify over such colourings and by picking (a canonical) one,
    the new subtree can be defined by choosing each $w_i$ as  the node coloured with $i$,
    for every $i\in[d']$. All these updates can be expressed by first-order formulas.
\end{proof}

For the closest string problem, we fix an alphabet $\Sigma$, and let $\hamD(s_1,s_2)$ denote the Hamming distance of
$s_1$ and $s_2$, i.e.~the number of positions where $s_1$ and $s_2$ differ.

\paraProblemDescription{\cString}{Strings $s_1, \ldots, s_n \in \Sigma^n$ for some $n \in \N$, and $d \in \N$}{$d$}{Is there a string $s \in \Sigma^n$ such that $\hamD(s,s_i) \leq d$?}

An input to \cString with strings of length $n$ is represented by a
structure with domain $[n]$.  It has the  natural linear order on $[n]$ and, for every $\sigma \in \Sigma$ 
a relation $R_{\sigma}(i,j)$ with the meaning $s_i[j] = \sigma$, i.e.
string $s_i$ has symbol $\sigma$ at position $j$.

A search tree (see \cite[Section 8.5]{Niedermeier06}) of depth at most $d$ and degree at most $d+1$ gradually adapts a candidate string $s$, which is initially set to $s_1$. If an input string $s_i$ is ``far apart'' from $s$, the tree branches on the first $d+1$ differences and changes $s$ towards $s_i$. 

\begin{propositionrep}
    $\cString \in \ParaT$.
  \end{propositionrep}
  The construction is quite straightforward and can be found in the appendix.
  \begin{appendixproof}
    We first recall the classical static \FPT algorithm for \cString based on bounded-depth search trees \cite[Section 8.5]{Niedermeier06}.
It uses the following observations. Firstly, as necessarily $\hamD(s,s_1) \leq d$, if a solution string $s$ exists it can be derived from $s_1$ by changing at most $d$ positions of $s_1$. Secondly, if $\hamD(s_1,s_i) \leq d$ for all $i \in \{2, \ldots, n\}$, then $s_1$ is a solution string. Otherwise there is an $i$ such that $\hamD(s_1,s_i) \geq d+1$, and a solution $s$ needs to agree with $s_i$ on at least one of the first $d+1$ positions that $s_1$ and $s_i$ differ on.

Based on these observations, a search tree is constructed as follows. Every node $v$ is labelled with a candidate string $s_v$ and its depth $d_v$. For the root node $r$ we set $s_r = s_1$.
For each tree node $v$ of depth $d_v < d$ such that $s_v$ is not a solution string for the instance $s_2, \ldots, s_n$, a string $s_i$ is selected such that $\hamD(s_v,s_i) \geq d+1$. Let $j_1, \ldots, j_{d+1}$ be the first $d+1$ positions in which $s_v$ and $s_i$ differ.
For each $j \in \{j_1, \ldots, j_{d+1}\}$ a child of $v$ is added that
is labelled with the string that results from $s_v$ by replacing the
symbol at position $j$ with $s_i[j]$.

We now adapt the classical search tree approach from above, analogously to Example~\ref{ex:vc-t}, and construct a first-order program \FOprog that traverses the search tree in a depth-first manner.

The program uses a relation $C$ that represents a path in the search tree to the current node. The relation $C$ contains a tuple $(\ell,i,j,m)$ if at depth $\ell$ of the search tree the string $s_i$ has hamming distance at least $d+1$ from the current candidate string, the position $j$ is the $m$-th position that the candidate string and $s_i$ differ on, and the new candidate string is obtained by replacing the symbol at position $j$ by $s_i[j]$. 

Note that one can define the current candidate string in \FO given $C$. The set of strings that have hamming distance at least $d+1$ to the candidate can be defined in \ParaT, as one can count in $d+1$ iterations the number of differences up to $d+1$ for each string.
Therefore, first-order formulae can check whether the current relation $C$ encodes a solution string, simulate the move to a child node of the current search tree node if its depth is smaller than $d$, or otherwise simulate a backtrack step, if the search tree is not already fully traversed.

The move to a child is performed as follows, assuming that $C$ contains $\ell < d$ tuples. Let $i$ be minimal such that $s_i$ differs on at least $d+1$ positions with the current candidate string, and let $j$ be the first of them. Then the tuple $(\ell+1,i,j,1)$ is inserted into $C$.

If $C$ contains $d$ tuples and a backtrack step needs to be performed, let $\ell$ be the largest number such that $(\ell,i,j,m) \in C$ with $m \leq d$.
All tuples $(\ell',i',j',m')$ with $\ell' \geq \ell$ are removed, and the tuple $(\ell,i,j'',m+1)$ is inserted into $C$, where $j''$ is the first position after position $j$ such that $s_i$ has a different symbol at that position than the candidate string that is defined by the first $\ell-1$ tuples of $C$.
\end{appendixproof}

Next, we explore the parameterised problem \fVS.
Given a graph $G=(V,E)$, a feedback vertex set (\fvsSet) for $G$
is a set $S \subseteq V$ such that for every cycle $C$ in~$G$, $S \cap
C \neq \emptyset$ holds,
i.e. $G-S$ is a forest. 

\paraProblemDescription{\fVS}{An undirected graph $G$}
    {$k$}{Does $G$ have a feedback vertex set  of size $k$?}

\begin{propositionrep}\label{prop:fVSparaTD}
    $(\fVS, \Delta_{E} \cup \pmone) \in \ParaTD$.
  \end{propositionrep}
  \begin{inlineproof}[Proof idea]
    We show that $\fVS$ can be maintained in $\ParaTD$ using
a depth-bounded search tree, similarly as for \vCover. The result
uses a well-known approach relying on the fact that if a graph of minimum degree 3 has
a \fvsSet of size $k$ then the length of its minimal cycle is bounded
by $2k$ (e.g. \cite{Downey1995}). A branching step consists of two
phases:  removing vertices of degree 1
or 2, and finding a small cycle. Then, each branch selects one of
these cycle vertices for the \fvsSet candidate. 
At the leaves of the search tree it has to be checked if the graph
obtained by deleting the chosen vertices of the current branch is
acyclic. A cycle exists, if there exists an edge $(u,v)$ and
$u$ is reachable from $v$ in $G-(u,v)$, thus this can be decided with
the transitive closure of the edge relation. The latter can be maintained in DynFO under edge insertions and deletions \cite{DattaKMSZ18}
and, as we show  in the appendix,
also under vertex deletions (simulated by removing all edges of a
vertex).
  \end{inlineproof}
\begin{appendixproof}
    We describe a parameterised, dynamic program \prog with iteration but without advice
    that maintains $\fVS$ under single edge changes and parameter changes by 1.
    On each edge change it follows the procedure of the program we sketched in \Fref{ex:vc-t},
    that is, it iterates over the search tree in a depth first manner.
    Similar to \Fref{ex:vc-t} the program stores a representation of the path
    from the root of the search tree to the current node and keeps
    track of the candidate set $F$ and all cycles that were used for branching. 
   Additionally it stores the graph $G-F$ and all auxiliary relations
   of the program \prog' of \Fref{prop:reach_complex_changes} to
   maintain reachability on $G-F$, for all
   nodes along the current branch. 
    In order to have the latter available for the root of the search tree
    \prog simulates \prog' for all edge changes on the input graph $G$.
    Performing a backtracking step can be done similarly as in \Fref{ex:vc-t}:
    Find the largest depth $\ell$ such that the vertex added into the candidate set at depth $\ell$
    was not the largest of the stored cycle and use the next vertex of
    the cycle.
    Deciding if at any step the candidate set $F$ is a valid \fvsSet,
    i.e.~checking if there is an edge $(u,v)$ in $G-F$ such that $u$
    is reachable from $v$ in $(G-F)-(u,v)$, amounts to simulating one
    further step of \prog' for a single edge deletion. Since the
    size of the search tree is bounded by $(2k)^{k+1}$, it suffices to
    show that each branching step can be done by a number of
    iterations that is bounded by a function in $k$.
    
    We next describe how a single branching step is done, in principle.
    Let $G_0\df G-F$ be the graph of the current search tree node and
    let $k_0\df k-|F|$.
    The program \prog performs the following three steps.
    \begin{enumerate}[(1)]
        \item Get rid of vertices of degree 1 by removing maximal (attached) trees.
            A vertex $u$ is part of an \emph{attached tree},
            if there is a vertex $v$ so that, in $G_0-(u,v)$,
            $u$ is not reachable from $v$ and the connected component of $u$ is a tree.
        \item Get rid of vertices of degree 2 by merging simple paths
            whose inner vertices all have degree 2 with a single edge between its endpoints.
        \item Search for a cycle of length $2k_0$ and conclude that
          there is no $\fvsSet$ of size $k$ containing~$F$,            if there is no such cycle.
    \end{enumerate}

    We note, that the graph $G_2$ resulting from steps (1) and (2) has
    a \fvsSet of size $k_0$  if and only if $G$ has a \fvsSet of size
    $k$ that contains $F$.
    Step (3) is justified by the following claim.
    
    \begin{claim}\label{lem:small-cycle}
        If $G_2$ has a \fvsSet of size $k_0$, then it is acyclic or
        contains a cycle of length at most $2k_0$.
    \end{claim}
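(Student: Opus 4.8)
The plan is to work with the reduced graph $G_2$, which after steps~(1) and~(2) has minimum degree at least $3$, since every vertex of degree $1$ or $2$ has been removed or suppressed. Fix a feedback vertex set $S$ of $G_2$ with $|S|=k_0$ and write $F\df G_2-S$, so that $F$ is a forest. If $G_2$ is acyclic we are immediately in the first case, so assume $G_2$ has a cycle and let $C$ be a \emph{shortest} one; I will show $|C|\le 2k_0$. Observe first that $C$ is chordless: any chord would split $C$ into two strictly shorter cycles, contradicting minimality. I would also dispose of the degenerate cases separately, namely $|C|\le 4$ (trivially at most $2k_0$ once $k_0\ge 2$; and for $k_0\in\{0,1\}$ one checks that minimum degree $3$ forces $F=\emptyset$ and hence $G_2$ acyclic, since every vertex of $F$ would need at least two neighbours inside the forest).

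The reduction I aim for is $|C\cap F|\le k_0$. Writing $m\df|C\cap S|$, a shortest cycle must use at least one vertex of $S$ (otherwise $C$ would lie inside the forest $F$), so $m\ge 1$, and $|C|=|C\cap F|+m\le |C\cap F|+k_0$. Hence bounding the number of forest vertices on the shortest cycle by $k_0$ yields $|C|\le 2k_0$, and the entire weight of the argument lies in this bound.

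The key tool is a \emph{shortcut} argument exploiting the minimum degree. Each $f\in C\cap F$ has degree at least $3$ in $G_2$, but only two of its edges lie on $C$; since $C$ is chordless its remaining neighbour(s) lie off $C$. The crucial observation is that, for $|C|\ge 5$, no vertex $z\notin C$ can be adjacent to two distinct vertices $f,f'\in C\cap F$: otherwise the path $f\,z\,f'$ together with the shorter of the two $C$-arcs between $f$ and $f'$ would form a cycle of length at most $|C|/2+2<|C|$, contradicting minimality. In particular, restricting $z$ to lie in $S\setminus C$, the assignment of each forest vertex of $C$ that has a neighbour in $S\setminus C$ to (one of) those neighbours is injective, so at most $k_0-m$ of the vertices in $C\cap F$ send an edge into $S\setminus C$.

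The step I expect to be the main obstacle is the treatment of the remaining vertices of $C\cap F$, namely those whose extra edges all re-enter the forest $F$ off the cycle. These do not produce a shortcut directly, because two such edges can reconnect only through $S$, possibly via a long detour inside the acyclic part of $F$, so minimality of $C$ gives no immediate contradiction. To handle them I would pass to an inductive argument on $k_0$: using that every leaf of $F$ has at least two neighbours in $S$ (as its single forest-edge leaves it short of degree $3$), one peels off a suitable leaf-structure of $F$ together with its edges into $S$, either exhibiting a cycle of length at most $2k_0$ right away or reducing to a smaller instance whose feedback vertex set and forest are strictly smaller. Carrying out this peeling while controlling the detour lengths so as to preserve the bound $2k_0$ is the technical heart of the proof. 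Once every forest vertex of the shortest cycle has been charged, either directly through the shortcut or through the inductive reduction, to a distinct element of $S$, we obtain $|C\cap F|\le k_0$ and therefore $|C|\le 2k_0$, which completes the argument.
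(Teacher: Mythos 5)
Your proof is incomplete exactly where you say it is, and unfortunately the gap is not just technical: the intermediate bound you reduce everything to is false. You aim to show that for a shortest cycle $C$ one has $|C\cap F|\le k_0$, by charging each forest vertex of $C$ injectively to an element of $S$. Consider the Heawood graph (the point--line incidence graph of the Fano plane): it is simple, $3$-regular, has girth $6$, and its feedback vertex number is exactly $4$ (removing $t$ vertices deletes at most $3t$ of the $21$ edges, and a forest on $14-t$ vertices has at most $13-t$ edges, forcing $t\ge 4$; conversely, deleting the four points not on a fixed line $L$ leaves a tree). Take $S$ to be those four points, and take the $6$-cycle $p_1,L,p_2,M,q,N$ where $p_1,p_2\in L$, $q\in S$, and $M,N$ are the lines through $p_2,q$ and $q,p_1$. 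This shortest cycle (of length $6\ge 5$, so outside your degenerate cases) meets $S$ only in $q$, hence has \emph{five} forest vertices while $k_0=4$. So no injective charging of $C\cap F$ into $S$ can exist, and the bound $|C\cap F|\le k_0$ simply fails; the claim's conclusion of course still holds here ($6\le 8$), but not along your route. Note also that three of those five forest vertices ($p_1$, $p_2$, $L$) have \emph{all} their off-cycle edges going into the forest, which is precisely the case your shortcut argument cannot touch and which your sketched ``peeling'' induction would have to handle --- the example shows it cannot be handled by any argument whose output is an injection of $C\cap F$ into $S$.

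For comparison, the paper's own proof is deliberately short: it first observes that step (2) of the reduction may create self-loops or parallel edges, which immediately yield cycles of length $1$ or $2$; otherwise $G_2$ is a simple graph of minimum degree $3$, and the paper then cites the known claim of Downey and Fellows that such a graph with a feedback vertex set of size $k_0$ has a cycle of length at most $2k_0$. You are, in effect, trying to reprove that cited lemma from scratch, which is a genuinely nontrivial global argument, not a local one around a single shortest cycle. There is also a secondary slip: you treat $G_2$ as simple, but the degree-2 suppression can produce multigraphs; for instance, two vertices joined by three parallel edges have minimum degree $3$ and feedback vertex number $1$, so your assertion that $k_0\le 1$ forces $G_2$ to be acyclic is wrong in this setting (the conclusion survives only because the parallel edges give a $2$-cycle, which is exactly the case the paper splits off first).
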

    \begin{proofsketch}
        Step (2) can produce some vertices with self loops or multiple edges.
        If $G_2$ contains a self loop or a multiple edge,
        these edges form a cycle of length $1$ or $2$ respectively.
        Otherwise $G_2$ is a simple graph with minimum degree $3$,
        so the claim follows by~\cite[Theorem 2.2, Claim]{Downey1995}.
    \end{proofsketch}

    We now describe how all three steps can be performed by a \FO
    program. 
    The conditions for a vertex being removed in step (1) can be tested by simulating \prog',
    so step (1) is FO-definable with  the help of the auxiliary relations of \prog'.
    We emphasise, that only attached trees are being removed and
    the reachability information therefore does not change for the remaining vertices.
    Let $G_1$ denote the graph resulting from step (1).

    Vertices $u$ and $v$ are connected by a new edge in step (2), if they
    both have degree at least 3 in $G_1$ and
    there are vertices $u',v'$, and edges $(u,u')$ and $(v,v')$ for which the following holds.
    \begin{itemize}
        \item All vertices reachable from $u'$ in $G_1-\{(u,u'), (v,v')\}$ are also reachable from $v'$ and vice versa.
        \item All these vertices (including $u',v'$) have degree 2 in $G_1$.
    \end{itemize}
    Again, by simulating \prog' these conditions can be tested in
    FO. If multi-edges are introduced by (2), it suffices to store one
    additional edge (per edge)    in an additional edge relation~$E'$.
    Additionally a loop-edge is inserted if the connected component of a vertex $u$
    contains only vertices of degree 2 and $u$ is the smallest of those vertices.
    Altogether, step (2) is also FO-definable using the auxiliary relations.
    
    The cycle of length at most $2k_0$ in step (3) can be found
    by constructing a canonical breadth-first search tree $T$ of depth $k_0$
    starting from each vertex in parallel.
    More precisely, \prog computes a binary relation $B$ representing the edge relation of $T$ and
    a ternary relation $I$ containing all tuples $(u,v,w)$ where $v$ is on the (unique) path between $u$ and $w$ in $T$.
    We note, that $T$ contains a cycle of length at most $2k_0$ as
    soon as there are vertices $v_1\not=v_2$ with $(u,v_1),(u,v_2)\in B$
    and (a) an edge $(v_1,v_2)$ in $G$ or (b) a vertex $v$ with edges $(v,v_1)$
    and $(v,v_2)$ in $G$.\tzm{The previous sentence is weird. First, $(u,v_1),(u,v_2)\in B$ seems to think that $B$ is the transitive closure of $T$. Second, $v$ is not contained in T and thus this seems not to fit the first part of the sentence. Maybe better: We note, that $T$ represents a cycle of length at most $2k_0$ as
    soon as there are nodes $v_1\not=v_2$ reachable from a node $u$ of $T$
    and (a) an edge $(v_1,v_2)$ in $G$ or (b) a node $v$ with edges $(v,v_1)$
    and $(v,v_2)$ in $G$.} This cycle can then be identified with the
    help of $I$.
    
    The program \prog computes these relations as follows.
    In the $i$-th round, \prog adds $(u,v)$ into $B$, if  there is an
    edge $(u,v)$ in $G$, $u$ is
    currently a leave of $T$ and $v$ is \emph{not} the parent of $u$ in
    $T$.\tsm{Since $G$ is undirected, I think the last condition is needed.}\tzm{Agree.} If $v$ is already in $T$ or a vertex $v$
    would be inserted due to two distinct leaves, a cycle is found and the
    search is stopped. A tuple $(u,v,w)$ is added to $I$ in the $i$-th round when an edge $(v',w)$ gets inserted into $B$,
    such that (a) $(u,v,v') \in I$, or (b) $v=w$ and $(u,v',v') \in I$.
    After $k_0$ rounds $I$ contains all tuples $(u,v,w)$ such that $v$ is on the BFS path between $u$ and $w$.
    
    So, a new search tree child can be computed in $k_0+2$ rounds.
\end{appendixproof}

\begin{toappendix}
 The proof of \Fref{prop:fVSparaTD} is completed by showing that reachability can be maintained under vertex removal. Here, by saying that a vertex is removed from a graph, we mean that all adjacent edges of this vertex are deleted.\footnote{We recall that the
  dynamic complexity setting does not
  allow \emph{real} vertex deletions.} We do not allow to insert edges to a
removed vertex afterwards, since this suffices for the
purpose of \Fref{prop:fVSparaTD}. 

\newnotation{\vin}{v^{\text{in}}}
\newnotation{\vout}{v^{\text{out}}}

  \begin{proposition}
    \label{prop:reach_complex_changes}
    Reachability in directed graphs can be maintained in \DynFO under
    single edge changes and removal of vertices.
  \end{proposition}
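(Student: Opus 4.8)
The plan is to reduce reachability under vertex removal to reachability under single edge changes, for which \DynFO suffices by~\cite{DattaKMSZ18}. The key device is the standard vertex-splitting gadget: for the input graph $G=(V,E)$ we maintain an auxiliary directed graph $G'$ on the doubled vertex set $\{x^{\mathrm{in}},x^{\mathrm{out}} \mid x\in V\}$. For every edge $(u,v)\in E$ the graph $G'$ contains a \emph{cross edge} $(u^{\mathrm{out}},v^{\mathrm{in}})$, and for every vertex $x$ that currently participates (i.e.\ has already been touched by a change and has not been removed) it contains an \emph{internal edge} $(x^{\mathrm{in}},x^{\mathrm{out}})$. A straightforward induction on path length shows that, for all non-removed $u,w\in V$, there is a path from $u$ to $w$ in $G$ if and only if $u^{\mathrm{in}}$ reaches $w^{\mathrm{out}}$ in $G'$: every transit through an intermediate vertex $x$ must use the internal edge $x^{\mathrm{in}}\to x^{\mathrm{out}}$, so internal edges encode exactly the presence of the vertices.

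The point of the gadget is that each change operation on $G$ translates into a bounded number of single-tuple edge changes on $G'$. An insertion $\ins_E(u,v)$ becomes the insertion of the cross edge $(u^{\mathrm{out}},v^{\mathrm{in}})$ together with the insertion of the internal edges $(u^{\mathrm{in}},u^{\mathrm{out}})$ and $(v^{\mathrm{in}},v^{\mathrm{out}})$ whenever these are not yet present, i.e.\ at most three single edge insertions. A deletion $\del_E(u,v)$ becomes the deletion of the cross edge $(u^{\mathrm{out}},v^{\mathrm{in}})$. Crucially, the removal of a vertex $v$, which deletes \emph{all} edges incident to $v$ in $G$, becomes the deletion of the \emph{single} internal edge $(v^{\mathrm{in}},v^{\mathrm{out}})$ in $G'$. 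Since reachability is maintainable in \DynFO under single edge insertions and deletions~\cite{DattaKMSZ18}, and a constant number of single-tuple changes can be processed in one update step by composing the update formulas constantly often, we can maintain reachability in $G'$, and hence the relation we are after in $G$, by a first-order post-processing that reads off $u^{\mathrm{in}}\to^* w^{\mathrm{out}}$ and additionally accounts for reflexivity, so that in particular a removed or isolated vertex correctly reaches exactly itself.

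The main point requiring care is that, when a vertex $v$ is removed, we leave its incident \emph{cross} edges in $G'$ untouched; deleting them would amount to up to linearly many single-tuple changes and would break the single-edge-change invariant. We must therefore argue that these leftover cross edges are \emph{inert}: after the internal edge $(v^{\mathrm{in}},v^{\mathrm{out}})$ has been deleted, no path between the copies of two non-removed vertices $u,w\neq v$ can use them. Indeed, from a start vertex $u^{\mathrm{in}}$ one cannot reach $v^{\mathrm{out}}$, since the only edge into $v^{\mathrm{out}}$ was the deleted internal edge, so cross edges leaving $v^{\mathrm{out}}$ are never traversed; and a cross edge entering $v^{\mathrm{in}}$ is a dead end, as the only edge out of $v^{\mathrm{in}}$ was the deleted internal edge. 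Hence $v$ can be neither a transit vertex nor an endpoint of a relevant path, which matches the situation in $G$, where $v$ and all its edges are gone. This is exactly the property that makes the single-edge-deletion translation of vertex removal correct, and it relies on the assumption that edges are never inserted at a vertex after its removal.

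Finally, the doubled vertex set of $G'$ is encoded over the original domain by adding one ``copy'' coordinate: using the two smallest domain elements as flags for the $\mathrm{in}$- and $\mathrm{out}$-copies, available through the built-in order, the edge relation of $G'$ and all auxiliary relations of the program of~\cite{DattaKMSZ18} are represented by relations of suitably increased arity over the original domain, and all its update formulas carry over under this encoding. Initialisation is unproblematic: the initial graph $G$ is empty and no vertex has been touched, so $G'$ is empty as well, matching the empty-initialisation convention; a unary relation recording the removed vertices is maintained trivially. Putting these pieces together yields the claimed \DynFO program.
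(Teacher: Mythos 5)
Your proof is correct and follows essentially the same route as the paper: the in/out vertex-splitting gadget that turns removal of a vertex $v$ into deletion of the single edge $(v^{\mathrm{in}},v^{\mathrm{out}})$, combined with the \DynFO{} maintainability of reachability under single edge changes from \cite{DattaKMSZ18}. The only difference is presentational: the paper compresses the technical details (domain encoding, bounded number of simulated changes, correctness of leftover edges) into the observation that the construction is a bfo-reduction and cites \cite[Proposition 4]{DattaKMSZ18}, whereas you verify these points, including the inertness of stale cross edges, by hand.
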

  \begin{proof}
    In \cite{DattaKMSZ18} it was shown that there is a dynamic program
    $\prog$ that maintains \reach under single edge changes.  Let
    $G = (V,E)$ be the input graph.  The simple idea is to replace
    each vertex $v$ by two vertices \vin and \vout and an edge
    $(\vin,\vout)$. All in-coming edges of $v$ lead to \vin and all
    out-going edges leave from \vout. This allows to simulate the
    deletion of all edges of $v$ in $G$ by removing just
    $(\vin,\vout)$ in the new graph.

    Technically, this is a bfo-reduction in the sense of
    \cite{DattaKMSZ18} and therefore the proposition basically follows
    from \cite[Proposition 4]{DattaKMSZ18}.
  \end{proof}

\end{toappendix}

\subsection{Kernelisation}\label{subsec:kernel}

Bannach and Tantau~\cite[Theorem 2.3]{BannachT18}
show that the famous meta-theorem ``a problem is fixed parameter tractable
if and only if a kernel for it can be computed in polynomial
time'' can be adapted to connect the \AC-hierarchy with its
parameterised counterpart. In this section we (partially) translate this
relationship to the parameterised, dynamic setting.

A \emph{kernelisation} of a Boolean parameterised query $(Q, \kappa)$ over schema $\schema$ is a self-reduction $K$ from \schema-structures to \schema-structures such that (1) $\inp \in Q$ if and only if $K(\inp) \in Q$, and (2) $|K(\inp)| \leq h(\kappa(\inp))$, for all $\schema$-structures $\inp$ and some fixed computable function $h: \N \rightarrow \N$. The images of a kernelisation $K$ are called \emph{kernels}.  We say that a kernel of $(Q, \kappa)$ can be maintained in some class $\calC$  under some set $\Delta$ of change operations, if the kernels with respect to some kernelisation $K$ can be maintained in $\calC$ under changes from $\Delta$.%

\begin{theoremrep}\label{thm:kernelParaSD}
Let $(Q,\kappa, \Delta)$ be a Boolean parameterised dynamic query of $\schema$-structures.
\begin{enumerate}
 \item If a kernel for $(Q,\kappa)$ can be maintained under $\Delta$ in \DynFOar then $(Q,\kappa,\Delta)$ is in \ParaSD. 
 In addition, if $(Q, \kappa)$ has an explicit parameter and $\Delta = \Delta_\schema \cup \pmone$ then also the converse holds.
 \item If $Q \in \PSPACE$ and a kernel for $(Q,\kappa)$ can be maintained under $\Delta$ in \DynFOar then $(Q,\kappa,\Delta)$ is in \ParaTD.
\end{enumerate}
\end{theoremrep}
\begin{proofsketch}
Towards proving (a), suppose that a kernel of $(Q, \kappa)$ with respect to a
kernelisation $K$ can be maintained under $\Delta$ by a $\DynFOar$-program $\prog$. 
A $\ParaSD$-program~$\prog'$ for $(Q,\kappa,\Delta)$ maintains a kernel for the current input structure by simulating~$\prog$.
The kernel $K(\inp)$ of an input structure $\calI$ is represented by at most $h(\kappa(\inp))$ elements, where $h$ is the function from the second condition of the definition of the kernelisation $K$. Therefore $\prog'$ can check whether $K(\calI) \in Q$ by Lemma~\ref{lem:small} and Proposition~\ref{prop:SDynS}.

For  proving the converse of (a) under the stated assumptions, suppose that $(Q, \kappa)$ has an explicit parameter and that $\Delta = \Delta_\schema \cup \pmone$. 
We construct, from a \ParaSD-program \prog with advice $\pi$ that maintains $(Q,\kappa, \Delta)$, a \DynFOar-program~$\prog'$ that maintains a kernel for $(Q,\kappa)$. 
The idea is to use a standard trick from parameterised complexity, a case distinction between small and large parameters. If the parameter is small enough in comparison to the domain size, $\prog'$ can compute the advice structure of $\prog$ at initialisation time and  can simulate $\prog$ from then on. If the parameter is large, $\prog'$ uses the ``small'' input instance as a trivial kernel.

Towards proving (b), suppose that a kernel of $(Q, \kappa)$ with respect to a kernelisation $K$ can be maintained under $\Delta$ by a $\DynFOar$-program $\prog$, and that $Q \in \PSPACE$. Recall that unlimited (or equivalently exponential) iteration of \FO-formulas captures $\PSPACE$  over ordered structures (see, e.g., \cite[Theorem 10.13]{ImmermanDC}). A \ParaTD-program can  maintain the current kernel $K(\inp)$ by simulating $\prog$. After updating the kernel after a change, it computes the result of $Q$ for $K(\inp)$ by iterating the first-order formulas of the \PSPACE algorithm with a parameterised first-order program. %
Since at most $2^{|K(\inp)|{^{O(1)}}}$ iterations are necessary, it follows that the first-order program only needs a parameterised number of iterations.   
\end{proofsketch}

\begin{appendixproof}
  We make the approach for the converse of (a) more precise next. 
Suppose that the advice $\pi(k)$ can be computed by a Turing machine $M$ with time bound $f(k)$, for some non-decreasing computable function $f$ and all $k \in \N$. The computation of $M$ for a parameter value $k$ can thus be encoded by a binary string of length at most $f(k)^2$ (with some suitable binary encoding).

The program $\prog'$ maintains a kernel of size at most $h(k) \df 2^{f(k)^2}$.
To this end, let $n$ be the size of the domain, and let $k$ be a parameter value. 
The program $\prog'$ first determines the largest number $k'$, for which the computation of $\pi(k')$ is encoded in the binary representation of some element  of the domain $D$. More precisely, if $h(k') \leq  n$, then each binary string $s$ of length $f(k')^2$ can be represented by an element $a_s$ of the domain. Arithmetic on the domain allows to access the bit string encoded by $a_s$ (see, for instance, \cite[Theorem 6.12]{Libkin04}), and to verify whether $s$ indeed represents the computation of $M$ on input $k'$ in a first-order fashion.
From $a_s$, the program $\prog'$ thus extracts the advice structure $\pi(k')$ and stores it in its auxiliary structure.
All this can be done in a first-order fashion at initialisation time.
From the reasoning above we also deduce that $h(k') \leq  n$, but $h(k'+1) > n$.

Afterwards $\prog'$ simulates $\prog$ with the proviso that, if the input structure is $\inp = (\inp',k)$, it simulates $\prog$ with input structure $(\inp',\min(k,k'))$, thus making sure that the parameter never exceeds the value $k'$, for which the advice structure is available. 

Whenever $k>k'$ it holds $n<h(k)$ and $\prog'$ simply outputs $(\inp',k)$.

Otherwise, $\prog'$ output a fixed positive or negative structure, depending on whether $\prog$ accepts or rejects $(\inp',k)$. 
\end{appendixproof}

The assumptions for the proof of the direction $(2) \Rightarrow (1)$ are chosen because they are easy to state and satisfied by many natural parameterised dynamic queries. They can be relaxed though and, as an example, the result also holds for the standard change operations and the non-explicit parameter ``maximal node degree'' for graphs.

We now give an example of an algorithm whose underlying kernelisation can be simulated in \DynFOar.
For a set of points in $\N^d$, for some $d\ge 2$, a \emph{cover} is a set of lines  such that each of the points is on at least one line.
For a fixed dimension $d\geq2$, the problem \pointLC (``PointLineCover'') is defined as follows: %

\paraProblemDescription{\pointLC}{Distinct points $\tpl p_1, \ldots, \tpl p_n \in \N^d$}{$k$}{Is there a cover of the points of size  $k$?}

Each point $\tpl p_i$ with $i \in [n]$ is given by $d$ coordinates $p_i^1, \ldots, p_i^d$ of $n$ bits each.
To encode these numbers, we identify the domain of size $n$ with the set $[n]$ and use $d$ binary relations $X^1, \ldots, X^d$. We let $(i, j) \in X^\ell$ if the $j$-th bit of $p_i^\ell$ is $1$.

A classical kernel~(see e.g. \cite{KratschPR16} or \cite{LangermanM05}) for \pointLC can be obtained by realising that if a line contains at least $k+1$ points then it has to be used in a cover.  Otherwise the points on this line can only be covered by using at least $k+1$ distinct lines. A kernel for an instance can now be constructed by iteratively applying the following rule as long as possible: remove all points that belong to a simple
line that contains at least $k+1$ points and reduce $k$ by $1$. If, in the end, more than $k^2$ points remain, there is no line cover with $k$ lines.

In \cite{BannachT18} it was observed that the above reduction can be performed in parallel, since removing all points of a line removes at most one point from any other line. This immediately yields that \pointLC is in $\ParaTC^0$, since lines with at least $k+1$ points can be identified in $\TC^0$. The problem, however, is not in $\ParaAC^0 = \ParaS$  \cite{BannachT18} due to the bottleneck that collinearity of $n$-bit points cannot be tested in $\AC^0$.

We show that with an oracle for testing whether three points are collinear, a kernel of $\pointLC$ can be actually expressed in \FOar. Since collinearity of three points can be maintained in \DynFOar under bit changes of points, a kernel can be maintained in \DynFOar. Here the allowed changes are to modify single bits of the points $\tpl p_1, \ldots, \tpl p_n$, to enable or disable a point, and to change the number $k$. To allow that points can be enabled or disabled, we add an additional unary relation $P$ to structures that contains $i$ if $\tpl p_i$ is part of the current instance, that is, if it is \emph{enabled}.

\begin{lemmarep}\label{lemma:collinearity}
  Collinearity of three $d$-dimensional points with $n$-bit coordinates can be maintained in $\DynFOar$ under changes of single bits, for each fixed $d \in \N$.
\end{lemmarep}
\begin{appendixproof}
  Three points $\tpl p_1$, $\tpl p_2$, and $\tpl p_3$ with $\tpl p_i = (p^1_i, \ldots, p^d_i)$ are collinear if $p^1_1 \neq p^1_2$,  $p^1_1 \neq p^1_3$  and 
  $\frac{p^j_1 - p^j_2}{p^1_1 - p^1_2} = \frac{p^j_1 - p^j_3}{p^1_1 - p^1_3}$
  for all $j \in \{2, \ldots, d\}$. This is equivalent to 
   $p^j_1p^1_1 - p^j_2p^1_1 - p^j_1 p^1_3 + p^j_2p^1_3 = p^j_1p^1_1 - p^j_3p^1_1 - p^j_1 p^1_2 + p^j_3p^1_2$.
In the case where $p^1_1 = p^1_2 = p^1_3$ holds, analogously formed equations for the second dimension are used, and so one.
  
  In \cite{PatnaikI97} it was shown that the product of $n$-bit numbers can be maintained under single bit changes. Thus, by maintaining the products of all components of $\tpl p_1$, $\tpl p_2$, and $\tpl p_3$, collinearity can be checked since $n$-bit numbers can be added in $\FOar$. 
\end{appendixproof}

\begin{theorem}
    Let \mbox{$\Delta \df \Delta_{\{X_1, \ldots, X_d, P\}}\cup\{\pmone\}$}.
    \begin{enumerate}[(1)]
        \item $(\pointLC, \Delta) \in \ParaSD$
        \item $(\pointLC, \Delta) \in \ParaTD$
    \end{enumerate}
\end{theorem}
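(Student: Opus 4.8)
The plan is to derive both statements from the kernelisation meta-theorem, Theorem~\ref{thm:kernelParaSD}: it suffices to exhibit a kernelisation of $\pointLC$ whose kernels can be maintained in \DynFOar\ under the change set $\Delta$. Statement~(1) then follows from part~(a) of that theorem, and statement~(2) from part~(b), since $\pointLC\in\NP\subseteq\PSPACE$ (a cover of size $k$ can be guessed and verified in polynomial time). Thus the entire work lies in the \DynFOar-maintenance of a suitable kernel.

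As kernelisation I would use a single forcing round. Call a line \emph{heavy} if it carries more than $k$ enabled points; every heavy line must belong to every cover of size $k$. The kernel consists of the \emph{surviving} points, i.e.\ those lying on no heavy line, together with the reduced parameter $k'\df k-r$, where $r$ is the number of heavy lines; if $r>k$, or if more than $k^2$ points survive, the kernel is a fixed trivial no-instance. Correctness is the standard argument: the $r$ heavy lines are forced, so an original cover of size $k$ restricts to a cover of the surviving points by at most $k'$ lines, and conversely any cover of the surviving points by $k'$ lines extends by the heavy lines to a cover of the whole instance by $k$ lines. Since every surviving point sits only on lines of at most $k$ points, a cover of size $k'$ touches at most $k'k\le k^2$ of them, which justifies the size bound $h(k)=k^2$ and the rejection rule.

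The maintenance itself would be organised in layers. First, collinearity of point triples is maintained by Lemma~\ref{lemma:collinearity}. Next, for every pair $(i,j)$ I would maintain the number $c_{ij}$ of enabled points on the line through $\tpl p_i$ and $\tpl p_j$. For the easy changes ($\pmone$, which does not touch $c_{ij}$, and enabling or disabling a point, which shifts $c_{ij}$ by $\pm1$ exactly on the lines through that point) the update is immediate; the only subtle change is a bit flip, which \emph{moves} a point $\tpl p_i$. Here the key geometric observation is that, relative to any other point $\tpl p_a$, the moving point leaves the single line through $\tpl p_a$ and its old position and joins the single line through $\tpl p_a$ and its new position, so each $c_{ia}$ can be recomputed from the previously stored counts and the old and new collinearity relation by a first-order formula. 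Given the $c_{ij}$, the predicate ``line $(i,j)$ is heavy'' and hence ``point $l$ survives'' are plainly first-order, so the surviving point set, which is the bulk of the kernel, is definable at every step.

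The main obstacle is the two aggregate quantities the kernel still needs: the number $r$ of heavy lines (to fix $k'$) and the number of surviving points (to apply the $k^2$ cut-off). A single atomic change can flip the heaviness of many lines at once, so these cardinalities cannot be updated by a bare $\pm1$. My plan is to maintain, in addition, size-indexed histograms --- globally the number $M(c)$ of lines with exactly $c$ points, and per point $\tpl p_i$ the number $M_i(c)$ of lines through $\tpl p_i$ with exactly $c$ points --- together with $r=\sum_{c>k}M(c)$ and the survivor count. Under $\pmone$ only the single bucket at the threshold migrates, giving $r'=r-M(k+1)$; under enabling or disabling $\tpl p_i$ the histogram $M_i$ shifts by one index, each other $M_j$ changes only in the bucket of the line $(i,j)$, and $M$ is updated through the $M_i$; under a bit flip the identity $r'=r-M_{\text{old}}(k+1)+M_{\text{new}}(k)$ reduces the update of $r$ to two previously stored histogram values, while each $M_a$ changes in at most two buckets (those of the lines through $\tpl p_a$ and the old, respectively new, position of $\tpl p_i$). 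All of these are first-order over the stored counts using the available arithmetic, in the spirit of the maintainability results of~\cite{PatnaikI97}. Verifying that this histogram bookkeeping stays consistent under the point-moving bit flips --- in particular recovering the histogram attached to the destination position --- is the delicate technical point I expect to absorb most of the effort; once it is in place, Theorem~\ref{thm:kernelParaSD} yields both~(1) and~(2).
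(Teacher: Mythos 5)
Your skeleton matches the paper's: both arguments go through Theorem~\ref{thm:kernelParaSD} (part (a) for $\ParaSD$, part (b) for $\ParaTD$, using $\pointLC\in\NP\subseteq\PSPACE$), both rely on Lemma~\ref{lemma:collinearity}, and your one-round heavy-line reduction is exactly the paper's kernelisation, with the same correctness and $k^2$-size argument. The divergence is in how the kernel is produced in \DynFOar, and this is where your proposal has a genuine gap. The paper maintains \emph{no cardinalities at all}: it observes that the kernel is \emph{statically} \FOar-definable from the maintained collinearity relation, via a case distinction on the parameter. If $k\geq\log n$, the entire current input is itself a kernel (its size $n\leq 2^k$ is bounded in the parameter), so no reduction is needed; if $k<\log n$, then every quantity the kernel requires --- whether a line has $\geq k+1$ points, whether $|L|>k$, whether more than $k^2$ points survive --- is a threshold comparison against a polylogarithmic bound, and such small-threshold counting is expressible in \FOar (the paper cites \cite{DenenbergGS86}). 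Thus only collinearity is dynamic; everything else is redefined from scratch after each change.

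Your plan instead maintains the aggregates incrementally, and the counting obstructions are fatal, not ``delicate technical points.'' The hole you flag yourself --- recovering the histogram attached to the destination position after a bit flip --- is a real obstruction: the number of lines of a given size through the new position is the cardinality of an unbounded \FO-definable family, which $\FOar$ (i.e.\ uniform $\ACz$) cannot compute, and it is not stored; pre-storing histograms for all positions reachable by one bit flip does not close the recursion, since after a change the relevant virtual positions lie at Hamming distance two from the old ones, and so on. A second hole you do not flag concerns the survivor count: even under the ``easy'' change of disabling a point $\tpl p_i$, all lines through $\tpl p_i$ with exactly $k+1$ points lose their heaviness simultaneously, so up to $\Theta(n)$ points can become survivors in a single step; this delta is again the size of an \FO-definable set, it is not determined by any histogram of line sizes (survivor status depends on the overlap pattern of heavy lines through each point), and since $k$ --- hence the cut-off $k^2$ --- can be as large as $n$, you cannot appeal to small-threshold counting here. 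Both problems dissolve under the paper's $k$-versus-$\log n$ case distinction, which is the one idea your proposal is missing; with it, your histogram machinery becomes unnecessary.
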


\begin{proof}[Proof idea]

  By the previous lemma, a dynamic program can maintain a relation $C$ that contains a triple $(i_1, i_2, i_3)$ if the points $\tpl p_{i_1}, \tpl p_{i_2}, \tpl p_{i_3}$ are collinear, using Lemma \ref{lemma:collinearity}. The statement now follows from Theorem \ref{thm:kernelParaSD} and  the observation that a kernel can be defined in $\FOar$ from $C$.

    If $k \geq \log n$, the input structure \inp itself is a kernel of size at most $f(k)$.  Otherwise, the counting abilities of $\FOar$ (see for example \cite{DenenbergGS86}) can be used to define a kernel.  Since $k < \log n$, the set $L$ of lines with at least $k+1$ enabled points can be defined in \FOar, as well as the number $|L|$ of such lines. Additionally, the set $P$ of enabled points that are not on any line from $L$ is definable, and it can be determined in $\FOar$ whether there are more than $k^2$ of these points. Then the current kernel is defined as follows. If $|L| > k$, or $|L| \leq k$ and $|P| > k^2$, then it outputs a constant no-instance. Otherwise the kernel is the set $P$ with the parameter $k-|L|$.
\end{proof}

\subsection{Dynamic programming}\label{subsec:dynprogramming}

Dynamic programming is a fundamental technique
in algorithm design and as such it has been applied
in the field of parameterised algorithms many
times (e.g., \cite[Section 9]{Niedermeier06}).
A classical parameterised algorithm with dynamic programming shows
$\knapsack\in\FPT$.
\paraProblemDescription{\knapsack}{A set of $n$ items with profits $p_1, \ldots, p_n$
and weights $w_1, \ldots, w_n$, a capacity bound~$B$ and a profit threshold $T$}{
$B$}
{Is there a subset $S \subseteq [n]$ such that
$\sum_{i \in S} p_i \geq T$ and $\sum_{i \in S} w_i \leq B$?}
All numbers are from $\N$ and given as $n$-bit numbers. We choose a similar input encoding as for \pointLC in Subsection \ref{subsec:kernel}:
we identify the domain of size $n$ with the set~$[n]$, encode the profits $p_i$ using a binary relation $P$ such that $(i,j) \in P$ if the $j$-th bit of $p_i$ is $1$, and analogously encode the weights $w_i$ and the numbers $B, T$ by a binary relation $W$ and unary relations $B, T$, respectively.\footnote{We note that this 
restricts the possible weights and profits to numbers bounded by $2^n-1$. 
Larger values can be achieved by  a larger domain, where additionally represented items have profit and weight $0$.}

\begin{propositionrep}
\label{prop:knap}
 $(\knapsack,\Delta_\mtext{KS})\in\ParaSD$.
\end{propositionrep}
Here, $\Delta_\mtext{KS}$ denotes the set of changes that can
arbitrarily replace the profit and the weight of one item, and  set a
number $B$ or $T$ to any value.
\begin{proofsketch}
  The program combines the usual static algorithm with an idea that
  was  used to capture regular languages in \DynFO
  \cite{GeladeMS12}.  Intuitively, it maintains a three-dimensional
  table $A$ such that $A(i,j,b)$ gives the maximum profit one can
  achieve by picking items with overall weight exactly $b$ from
  $\{i,\ldots,j\}$. This table is encoded by a relation $A_\mtext{bit}$ of arity
  four in a straightforward manner. 
\end{proofsketch}

\begin{appendixproof}
We describe a parameterised dynamic program $\prog$ that intuitively
uses a table $A$ with dimensions $[n]\times [n] \times [B]$ to
maintain $(\knapsack,\Delta_\mtext{KS})$, with the intention that
$A(i,j,b)$ gives the maximum profit one can achieve by picking items
with overall weight exactly $b$ from
$[i,j]\df\{i,\ldots,j\}$. Technically, $A$ is encoded by a relation $A_\mtext{bit}$ with the intention that $(i,j,b,m) \in A_\mtext{bit}$ exactly
  if the $m$-th bit of the number $A(i,j,b)$ is $1$, where $i,j,m \in
  [n]$ and $b \in [B_\mtext{max}]$. 

Let $B_\mtext{max}$ be the given upper bound on the parameter value. The advice of $\prog$ consists of the domain $[B_\mtext{max}]$, together with the natural linear order and the \BIT predicate.
The dynamic program maintains a relation $A_\mtext{bit}$ of arity four with the intention that $(i,j,b,m) \in A_\mtext{bit}$ exactly if the $m$-th bit of the number $A(i,j,b)$ is $1$, where $i,j,m \in [n]$ and $b \in [B_\mtext{max}]$.
We present $\prog$ on the basis of $A$ in the following, the translation to $A_\mtext{bit}$ is obvious.

If there is a $b \leq B$ such that $A(1,n,b) \geq T$, then $\prog$ accepts the current input instance. No auxiliary relation needs to be updated under changes of $B$ and $T$, so we only need to sketch how $\prog$ updates an entry $A(i,j,b)$ when the profit $p_\ell$ and the weight $w_\ell$ of some item~$\ell$ is changed.
We assume $\ell \in [i,j]$, as otherwise no update is necessary.
If item $\ell$ shall not be part of the selection for this entry, the largest possible value $P_1$ one can achieve is given as $P_1 \df \max_{b_1 + b_2 = b} A(i,\ell-1,b_1) + A(\ell+1,j,b_2)$.
Otherwise, the largest possible value is $P_2 \df \max_{b_1 + b_2 + w'_\ell = b} A(i,\ell-1,b_1) + A(\ell+1,j,b_2) + p'_\ell$, where $p'_\ell$ is the changed profit and $w'_\ell$ is the changed weight of $\ell$.
The maximum of $P_1$ and $P_2$ becomes the updated value of $A(i,j,b)$.

The update can be expressed in \FOar since comparison and addition of two $n$-bit numbers is \FOar-expressible and the update formulas can existentially quantify elements from $[B_\mtext{max}]$ and translate between these elements and their encoding as $n$-bit numbers via the $\BIT$ predicate.
\end{appendixproof}

\begin{toappendix}
Before we turn to the proof of \Fref{prop:fVSparaSD}, we describe a technique from dynamic complexity that can be adapted to the parameterised setting.

If a query is maintained by a dynamic program, the program needs to be able to deal with arbitrarily long change sequences, that is, to run ``forever''. The \emph{muddling technique} from \cite{DattaMSVZ19} allows to soften this requirement under certain circumstances. %
More precisely, it allows to show the existence of a dynamic program by showing the existence of a dynamic program that can handle a \emph{bounded number of change steps}, starting from an arbitrary input structure and suitable auxiliary relations.

We formalise the muddling technique for the parameterised dynamic setting next. For a parameterised dynamic query $(Q,\kappa,\Delta)$, we call $\Delta$ \emph{gradual}, if a single change operation from $\Delta$ affects at most $d$ elements of the domain, for some $d\in\N$, and increases the parameter of a structure at most by one.
We say that a parameterised dynamic query $(Q,\kappa,\Delta)$ is \emph{short-term maintainable}, if $\Delta$ is gradual and there are non-decreasing computable functions $f,g,h$, a $(f,g)$-parameterised first-order program $(\FOprog,\pi)$ and a $(h,1)$-parameterised dynamic program $\prog$ (with advice but without iteration) that for any input structure $\inp$ with parameter $k \df \kappa(\inp)$ maintains $Q$ for $g(k+1)$ change steps $\alpha_1, \ldots, \alpha_{g(k+1)}$, starting from the state $(\inp,\FOprog(\inp))$.
That is, if $\FOprog$ needs $g(k)$ iterative steps to compute auxiliary relations for an arbitrary initial structure, then $\prog$ needs to maintain $Q$ for $g(k+1)$ change steps.

We emphasise three crucial differences between this definition and our ``standard'' maintainability: the computation does \emph{not} start from an empty structure, but from an arbitrary structure $\inp$. Therefore the need for initial auxiliary relations arises, however their computation can take only a parameterised number of rounds. Finally, the query needs to be maintained only for a slightly larger number of change steps.

This notion of maintainability is a variant of the notion of $(\calC,f)$-maintainability as defined in \cite{DattaMSVZ19}, which asks that a query is maintained for $f(n)$ many change steps, where $n$ is the size of the domain, after an initialisation that can be computed with complexity $\calC$.

The application of the muddling technique requires a technical condition, that ensures that the query under consideration does not crucially depend on ``isolated'' elements. 
Let $\adom(\db)$ denote the \emph{active domain} of the structure $\db$, that is, the set of elements of $\db$ that are used in some tuple or as some constant in $\db$.
We call a query $\query$ \emph{almost domain independent}, c.f.~\cite{DattaKMSZ18, DattaMSVZ19}, if there is a $c \in \N$ such that for every structure $\db$ with domain $D$ and every set $A \subseteq D \setminus \adom(\db)$ with $|A| \geq c$ it holds  $\restrict{\query(\db)}{(\adom(\db) \cup A)} = \query(\restrict{\db}{(\adom(\db) \cup A)})$.
Intuitively, this means that if the structure has at least $c$ domain elements that do not appear in any relation, then the query result does not depend on the exact number of such elements.
  
\begin{lemma}
\label{lem:muddling}
Every short-term maintainable dynamic parameterised query $(Q,\kappa,\Delta)$ with almost domain independent $Q$ is in $\ParaSD$.
\end{lemma}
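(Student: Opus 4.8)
The plan is to transfer the muddling technique of \cite{DattaMSVZ19} to the parameterised setting. Fix the data witnessing short-term maintainability: the non-decreasing functions $f,g,h$, the $(f,g)$-parameterised first-order program $(\FOprog,\pi)$, which computes initial auxiliary relations in at most $g(k)$ iterations, and the $(h,1)$-parameterised dynamic program $\prog$ whose update formulas are plain first-order (iteration depth $1$) and which, started from a state $(\struc,\FOprog(\struc))$ for an \emph{arbitrary} structure $\struc$ with $\kappa(\struc)=k$, maintains $Q$ correctly for the next $g(k+1)$ changes. Let $K\df\kmax$ be the global parameter bound, so every structure occurring in a run has parameter at most $K$ and $\FOprog$ needs at most $g(K)$ iterations. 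The decisive observation for landing in $\ParaSD$ (rather than only $\ParaTD$) is that the sole use of iteration by the witnesses is the initialisation performed by $\FOprog$; I will \emph{spread} these $g(k)$ iterations over $g(k)$ consecutive change steps, carrying out exactly one $\FOprog$-iteration per step, so that every per-step action remains plain first-order. The advice of the constructed program $\prog'$ bundles $\pi(K)$, the advice of $\prog$, and a ``background'' counter structure of size $\bigO(g(K))$ used to drive the schedule below; all of these have size bounded by a function of $K$, as required for $\ParaSD$.

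The program $\prog'$ runs several overlapping copies of the short-term machinery -- their number bounded by a function of $K$ and addressed, as in the case of \longpath, through an extra coordinate ranging over the advice domain -- each copy cycling through three phases governed by $T\df g(k)$ for the current, first-order-definable parameter $k=\kappa(\inp)$. In the \emph{build} phase (length $T$) a copy freezes the current structure as a snapshot $\struc$ and recomputes $\FOprog(\struc)$, advancing one iteration per step, while the changes that arrive in the meantime are recorded in an auxiliary relation. In the \emph{catch-up} phase the copy starts $\prog$ from $(\struc,\FOprog(\struc))$ and replays its backlog by feeding $\prog$ a \emph{constant} number $r$ of changes per step (one live change and $r-1$ backlog changes); since $r$ is bounded by a function of $K$, the composition of $r$ many $\prog$-updates is again expressible by a single first-order formula, so iteration depth stays $1$. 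Once the backlog is exhausted the copy is \emph{active} and simply forwards each live change to $\prog$; its query relation then answers $Q$. Snapshots are scheduled (via the background counter) so that a successor copy becomes active before its predecessor exhausts its validity window, guaranteeing that at every step some copy is active and correct.

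Correctness of the active copy's answer is immediate from short-term maintainability: at each step the active copy has processed at most $g(k+1)$ changes since its snapshot and was started from a legitimate state $(\struc,\FOprog(\struc))$, exactly the situation the definition covers. Almost domain independence enters, as in \cite{DattaMSVZ19}, to make the snapshot-and-replay discipline sound on the shared, unbounded domain: since the query result is insensitive to elements outside $\adom(\inp)$ beyond the constant threshold $c$, the several copies may coexist and rotate over the same domain without affecting the answer, and the finitely many fresh elements touched by a single (gradual) change can be treated as irrelevant padding. Gradualness of $\Delta$ is what keeps the phase lengths aligned with the parameter: a single change alters at most $d$ elements and raises $k$ by at most one, so $T=g(k)$ drifts slowly and can be retracked first-orderly, the schedule being reset whenever $k$ changes.

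The main obstacle is the counting that secures \emph{seamless handover}. A copy spends its first $T$ steps building (not yet running $\prog$) and then, catching up at rate $r$, becomes current after processing about $T+\tfrac{T}{r-1}$ changes, leaving an active budget of $g(k+1)-g(k)\bigl(1+\tfrac{1}{r-1}\bigr)$ changes before it expires. This budget is positive precisely when $g(k+1)>g(k)$, and then choosing the constant $r\df g(K)+2$ makes it at least one change; here the ``$+1$'' in the validity bound $g(k+1)$ is exactly what is exploited, and one may assume the witnessing $g$ to be strictly increasing so that the gap $g(k+1)-g(k)\ge 1$ is available at every level $k\le K$. The remaining, routine, task is to check that each per-step action -- one $\FOprog$-iteration, the constantly many $\prog$-updates, advancing the counter, detecting phase boundaries, and selecting the active copy -- is first-order definable over the extended structure with advice, so that $\prog'$ is a $(g',1)$-parameterised dynamic program for a suitable $g'$ and hence witnesses $(Q,\kappa,\Delta)\in\ParaSD$.
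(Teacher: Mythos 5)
Your overall architecture --- overlapping copies that snapshot the input, rebuild auxiliary data with $\FOprog$, replay a backlog, and then serve answers --- is the same as the paper's, but two of your scheduling choices break in ways the paper's proof is specifically engineered to avoid. First, the catch-up rate $r \df g(\kmax)+2$ is not available in $\ParaSD$: the update formulas of a dynamic program are \emph{fixed} first-order formulas of iteration depth $1$, so only a constant number of $\prog$-updates (a constant independent of $\kmax$) can be composed per change step; composing $g(\kmax)+2$ of them per step is exactly the parameterised iteration that separates $\ParaTD$/$\ParaSTD$ from $\ParaSD$. Second, with any constant catch-up rate, your build-at-one-iteration-per-step design forces an overshoot: a copy is caught up only after processing strictly more than $g(k)$ changes since its snapshot, so its active budget is at most $g(k+1)-g(k)-1$ and vanishes whenever $g(k+1)=g(k)$. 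Your repair --- ``one may assume the witnessing $g$ to be strictly increasing'' --- is not a without-loss-of-generality step: $g$ plays a dual role in the definition of short-term maintainability (iteration bound $g(k)$ for $\FOprog$ \emph{and} validity window $g(k+1)$ for $\prog$), and replacing $g$ by a larger function strengthens what is demanded of $\prog$ beyond what the hypothesis grants. The paper's proof sidesteps both problems at once: each thread applies the formulas of $\FOprog$ \emph{twice} per time step (finishing the build at time $t+\frac{g(k_t)}{2}$) and then replays the backlog \emph{two} changes per step, so it is current at time $t+g(k_t)$ having processed exactly $g(k_t) \le g(k_t+1)$ changes; since a new thread starts at every time step and $\Delta$ is gradual (so $k_{t+1}\le k_t+1$ and hence $g(k_{t+1})\le g(k_t+1)$), the responsibility intervals $[t+g(k_t),\,t+g(k_t+1)]$ cover every time point even when $g$ is constant, and only constant-fold compositions of first-order formulas are ever needed.

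There is also a missing piece at the start of the run: in your scheme no copy is active during roughly the first $g(\kmax)$ steps, and you never say how the query is answered there. The paper covers this initial segment by answering $Q$ over a domain of size $dg(\kmax)+c$ via an advice lookup table in the style of Lemma~\ref{lem:small} together with Proposition~\ref{prop:SDynS}: gradualness bounds the active domain by $d\cdot g(\kmax)$ during these steps, and \emph{this} is where almost domain independence is actually needed --- it guarantees that the answer over the small sub-domain agrees with the answer over the full domain. Your invocation of almost domain independence (letting the copies ``coexist and rotate over the same domain'') addresses a non-problem: the threads cannot interfere with each other anyway, since each keeps its own versions of the input and auxiliary relations, tagged by a thread index.
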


The high-level proof idea is as follows. Let $\prog$ and $\FOprog$ be as above and let, for every $t>0$, $\inp_t$ denote the input instance at time $t$ with parameter $k_t \df \kappa(\inp_t)$ . Here, each change operation represents a time step.
Then $Q$ can be maintained in $\ParaSD$ by a combined program $\prog'$ as follows. 
We view $\prog'$ as a parallel composition of several copies of some dynamic program, which we call \emph{threads}.
At time $t$,  $\prog'$ starts a thread computing the initial auxiliary relations for  $\inp_{t}$. This thread applies the respective formulas of $\FOprog$ twice per change step and thus completes this computation at time $t+\frac{g(k_t)}{2}$. 
In the next $\frac{g(k_t)}{2}$ rounds, $\prog'$ applies the $g(k_t)$ change operations that happen(ed) between time $t$ and $t+g(k_t)$, two at a time, by simulating $\prog$ twice. At time $t+g(k_t)$, $\prog'$ has computed  $Q(\inp_{t+g(k_t)})$. It can further maintain $Q$ until (including) time $t+g(k_t+1)$. At time $t+1+g(k_t+1)$ or earlier, the thread that starts at time $t+1$ takes over.

\begin{proof}[Proof sketch]
The proof follows the lines of \cite[Theorem~4.2]{DattaMSVZ19}.
Let $(\FOprog,\pi)$ be the $(f,g)$-parameterised first-order program and $\prog$ the dynamic program that witnesses that the
almost domain independent (with constant $c$) parameterised dynamic query
$(Q,\kappa,\Delta)$, where $\Delta$ is gradual with constant $d$, is short-term maintainable. We describe
a program $\prog'$ with advice $\pi$ that maintains $(Q,\kappa,\Delta)$ in
$\ParaSD$. Let $\kmax$ be the given upper bound on the parameter value.

For the first $g(\kmax)$ time steps, $\prog'$ can maintain $Q$ over a domain of size $dg(\kmax)+c$ with the help of a suitable advice structure along the lines of Lemma~\ref{lem:small}.   Since $Q$ is almost domain independent, the query result is the same over this domain  and the full domain. 

Afterwards,  $\prog'$ has at most $g(\kmax)$ threads and administrates
them in a round robin fashion. The thread starting at time $t$  is responsible for delivering the query answer from time $t+g(k_t)$ to $t+g(k_t+1)$, where $k_t$ is $\kappa(\inp_t)$. Since $\Delta$ is gradual, for each time point at least one thread is responsible. 
No conflicts can occur if there is more than one responsible thread for some time point, as all of them yield the same answer.

Let $T$ be the thread that starts at time $t$. It works in three phases. The first phase lasts from time point $t$ until $t+\frac{g(k_t)}{2}$ and in this phase $T$ simulates $\FOprog = (\Psi, \varphi)$, by applying the first-order formulas from $\Psi$ two times for every time step. The change operations that occur during this time are stored but not directly processed by $T$.
In the second phase, which lasts from time point $t+\frac{g(k_t)}{2}$ until $t+g(k_t)$, $T$ simulates $\prog$ and applies the changes that occurred from $t$ until $t+g(k_t)$, two at a time. The third phase starts at time $t+g(k_t)$ and may last until time $t+g(k_t+1)$. In this phase, $T$ still maintains $Q$ with the help of $\prog$ and yields the query result.

Thread $T$  maintains the following
relations:
\begin{itemize}
\item
a counter in order to know in which phase the thread is,
\item
relations $A_\delta$ for every $\delta \in \Delta$ that store the
changes that
have been applied to the input during the two phases,
\item
its own versions of the input relations,
\item
its own versions of the auxiliary relations of $\prog$. 
\end{itemize}
The separate auxiliary relations of each of the $g(\kmax)$ threads can be combined to auxiliary relations for the dynamic program $\prog'$ by having the  thread number as a new component to each tuple.
\end{proof}

 \end{toappendix}

\subsection{Iterative compression}\label{subsec:iterative-compression}

The iterative compression method (introduced in \cite{ReedSV04}, see also~\cite[Section 11.3]{Niedermeier06}) 
is used to obtain fixed parameter tractable algorithms for minimisation problems 
which are parameterised by the solution size.
It can roughly be described as follows: First, a trivial solution is
computed for a very small fraction of the input instance. Afterwards,
the fraction is continuously increased and each time a straightforwardly updated (but maybe too big) 
            solution is constructed and improved (``compressed'')
            afterwards (if necessary), until the input instance is completed and a valid solution is constructed.
We illustrate the transfer of this technique to the dynamic setting
with  \vCover. First we describe intuitively, how the static algorithm described in
\cite[Subsection 11.3.2]{Niedermeier06} can be adapted to the dynamic setting.

Let $G=(V,E)$ and $G'=(V,E')$ be two input graphs, where $G'$ results
from $G$ by inserting one edge $e=(u,v)$.  
Let us assume  that $C_0$ is an optimal vertex cover for $G$ of size $k$.
The set $C=C_0 \cup \{ u\}$ of size $k+1$  is trivially a vertex cover for $G'$, but
the optimal one $C'$ might have size $k$.  The crucial observation is that
if $C'=Z\cup Z'$ has size $k$, for a subset $Z$ of $C$ and a set $Z'$ disjoint from
$C$, then $Z'$ must consist of all neighbours of vertices in $C-Z$ that are
not in $Z$.  By a combination of colour coding with an adaptation
of a technique from \cite{DattaMSVZ19} for the parameterised setting,  a
dynamic program with advice (for the universal colouring family) can 
basically try out all subsets of $C$ for $Z$.  

\begin{propositionrep} \label{prop:fVSparaSD}
    $(\vCover, \Delta_E \cup \pmone) \in \ParaS$ by a compression-based dynamic program.
\end{propositionrep}
\begin{appendixproof}
  The static compression algorithm considers all possible intersections
between a better vertex cover $C'$ and $C$ (that is, all subsets $Z$ of $C$ of size at most  $k$)
and checks whether one of these intersections can be extended to a vertex cover for $G_{i+1}$ of size $k$. 
Thus, in a compression step, the algorithm has to solve the
following problem, at most $2^{k+1}-1$  times:

\paraProblemDescription{\disVCover}
    {An undirected graph $G=(V,E)$, a vertex cover $C$ for $G$, such
      that $|C|=k+1$ and $Z \subseteq C$}
    {$k$}{Is there a $C' \subseteq V$ such that $C'$ is a vertex cover for $G$, $|C'|=k$ and $C \cap C' = Z$?}

With the help of the following observation, solving \disVCover is easy. Let $G[X]$ denote the subgraph of $G$ induced by $X \subseteq V$.
Since $C$ is a vertex cover for $G$, the set $C \setminus Z$ is a vertex cover for $G[V \setminus Z]$
and every edge of $G[V \setminus Z]$ has one of its endpoints
in~$C \setminus Z$. 
Because $C'$ cannot contain any vertex from $C \setminus Z$, it needs to include all neighbours of these vertices.
So, the only candidate for $C'$ is the union of $Z$
and the set of all neighbours of vertices in $C \setminus Z$. 
It is easy to compute this set and to test whether it satisfies all conditions.
Altogether this yields an \FPT-algorithm for  \vCover.
As this algorithm obtains its solution by continuously adding edges,
its technique is amenable for the dynamic setting.

  We construct a parameterised dynamic program \prog that maintains
  \vCover based on compression. For simplicity, we first assume that
  the input graph has at all times a vertex cover of size at most
  $2k$ (where $k$ always denotes the current parameter value). In this case, the compression technique described above
  can be applied almost immediately. If an edge is modified, the
  dynamic program proceeds essentially as described above.%

  More precisely, the program \prog maintains a minimal vertex cover as well as its size, both stored in unary relations. The size can be used to answer the query and to handle changes of the parameter.

  Edge changes are handled as follows. Let $G$ and $G'$ be the old and the modified graph, respectively, and let $C$ be a minimal vertex cover for $G$ of size at most $2k$. 
If an edge $(u, v)$ is inserted, a new trivial cover $C_\text{t}$ for $G'$ can be chosen as $C_\text{t} \df C \cup \{u\}$, and we choose $C_\text{t} \df C$ in case of an edge deletion.
However, $C_\text{t}$ might not be a minimal solution. 
To compress $C_\text{t}$, the program first expresses every $Z \subseteq C_\text{t}$ with the help of colour-coding. Since $|C_\text{t}| \leq 2k+1$, an $(n, 2k+1, 2k+1)$-universal colouring family contains a colouring that maps each vertex in $C_\text{t}$ to a colour that is unique among vertices in $C_\text{t}$. 
Additionally, the advice structure of $\prog$ contains an element for every subset of the $2k+1$ colours, and a relation that connects these elements with the colours contained in the represented set.

Then \prog solves \disVCover for each $Z$ in parallel. Recall that the only solution for \disVCover for fixed $Z$ is $C' = Z \cup N(C_\text{t} \setminus Z)$
  where $N(X)$ is the set of all neighbours of vertices in $X$. 
The sets $C'$ are clearly \FO-definable, and $\prog$ can check whether $C \cap C' = Z$ holds. 
Again with the help of colour-coding, \prog can also check if $C'$ has size $|C_\text{t}|-1$. More precisely, \prog checks if an $(n, |C_\text{t}|-1, |C_\text{t}|-1)$-universal colouring family contains a colouring that colours each vertices in $C'$ uniquely.

  The program chooses the lexicographically smallest $C'$ if such a set exists, and otherwise selects $C_\text{t}$ as the new minimal vertex cover for the graph $G'$.
  
With the help of the muddling technique from Lemma~\ref{lem:muddling} we drop the assumption that there is a vertex cover of size $2k$ at all times, and show that $(\vCover, \Delta_E \cup \pmone)$ %
is short-term maintainable.
    More precisely, we show that there is a $(2^{2k},k-1)$-parameterised \FO-program that uses $k-1$ iterations to compute a vertex cover of size at most $2k$ if such a vertex cover exists, and rejects otherwise.
Then, the dynamic program as constructed above can maintain \vCover for $k$ change steps, as long as the minimal vertex cover does not exceed the size bound $2k$.
If that happens, either already for the initial graph or during the $k$ change steps, then the remaining up to $k$ changes cannot transform the input instance into a graph with a vertex cover of size at most $k$. So, during this time the dynamic program can trivially answer ``no''.

    It remains to give the details for the $(2^{2k},k-1)$-parameterised \FO-program that initialises the auxiliary data. This program basically constructs the search tree from Example~\ref{ex:vc-tree-dyn}, appending four levels in the first iteration and two levels in every subsequent iteration.
So, after $k-1$ iterations a search tree of depth $2k$ is available. 
Of course, the program also computes the necessary advice for this search tree and for the dynamic program $\prog$.
\end{appendixproof}

\section{Conclusion}
\label{section:conclusion}
In this work we started to investigate dynamic complexity from a parameterised algorithms point of view.
Besides the definition of the framework, we
explored how well-known techniques from parameterised algorithms
translate to our setting. Kernelisation and colour-coding worked quite well
for both settings. Search-tree based techniques translated well
to the setting with parameterised time and were more challenging for
parameterised space. On the other hand, dynamic programming (with superpolynomial parameter
values) seems better suited for parameterised space. The
compression-based program for \vCover translates, in principle, also
to   \ParaTD but the handling of instances with large minimal vertex
cover basically requires an additional implementation of some other
method and therefore makes this approach a bit pointless.
We also considered \emph{greedy localisation} and algorithms for
structures with bounded tree-width, but did not find any meaningful
applications in the dynamic setting, as discussed in the appendix.
\begin{toappendix}
  We briefly discuss two methods
which we were unable to transfer into the dynamic
setting.

The greedy localisation method~\cite[Section 11.4]{Niedermeier06}
is applied to maximisation problems that are
parameterised by the solution size. In this method
a maximal solution $S$ is computed
using a greedy algorithm and then
the information given by $S$ is used in order to localise
(and thus reduce) the
search space around $S$, so that an optimal solution can be found by brute force. Unfortunately the steps of
a greedy algorithm are usually inherently sequential,
thus maintaining the result of a greedy algorithm in
the dynamic parameterised classes defined here
seems very difficult. Things can get even more
complicated when solving a problem with greedy
localisation requires the repetitive application of 
the greedy algorithm. All these complications have
kept us from transferring the greedy localisation
method to the dynamic parameterised setting.

Courcelle's Theorem \cite{Courcelle90} implies that for every monadic second-order (\MSO) formula $\psi$ there is an \FPT algorithm that decides whether an input graph $G$ satisfies $\psi$, with parameter being the \emph{treewidth} of $G$.
Although each \MSO-defined graph property is in \DynFO for graphs with bounded treewidth \cite{DattaMSVZ19}, a corresponding result for the parameterised dynamic setting is unknown.
There are at least two bottlenecks. Firstly, the result of \cite{DattaMSVZ19} relies on the fact that tree decompositions for graphs of bounded treewidth can be computed in \LOGSPACE \cite{ElberfeldJT10}, which is not known for the parameterised counterpart \cite{BannachT16}.
Secondly, the update formulas from \cite{DattaMSVZ19} quantify over structures of polynomial size in the input length, where the treewidth determines the degree of the polynomial. It is unclear how the size of these structures can be restricted in an ``\FPT way''.
\cite{BannachT16}
\end{toappendix}

Particular open questions are whether \cString or \fVS can be maintained with
parameterised space and %
whether \ParaSTD is more expressive than \ParaSD.

\bibliography{bibliography}

\end{document}